\newcommand{\E}{{\mathbb E}}
\renewcommand{\o}{{\mathscr o}}
\renewcommand{\P}{{\mathbb P}}
\newcommand{\X}{{\mathscr X}}
\newcommand{\C}{{\mathscr{C}}}
\newcommand{\T}{{\mathscr{T}}}
\newcommand{\R}{{\mathscr{R}}}
\newcommand{\OO}{{\mathscr O}}
\newcommand{\bP}{{ \mathbf{P} }}
\newcommand{\bQ}{{ \mathbf{Q} }}
\newcommand{\bR}{{ \mathbf{R} }}
\newcommand{\bm}{{ \mathbf{m} }}
\newcommand{\Wopp}{{ W_{\text{opp}} }}
\newcommand{\B}{{\mathscr B}}
\newcommand{\one}{\mathbbm{1}}
\renewcommand{\R}{{\mathscr{R}}}
\newcommand{\rd}{{\mathsf{rd} }}
\renewcommand{\dbinom}{ \mathsf{dbinom} }
\newcommand{\Binom}{ \mathsf{Binom} }
\newcommand{\Rstu}{{\mathsf{R}}}
\newcommand{\Pchalky}{ \bR^{(\mathsf{cc})} }
\titlespacing{\section}{0pt}{1ex}{1ex}
\titlespacing{\subsection}{0pt}{1ex}{0ex}
\titlespacing{\subsubsection}{0pt}{0.5ex}{0ex}
\DeclareMathOperator*{\argmax}{arg\,max}
\newtheorem{theorem}{Theorem}
\numberwithin{equation}{section}
\definecolor{SkyBlue}{RGB}{14, 118, 188}
\definecolor{BrightRed}{RGB}{223,82, 78}
\title{Entropy-Based Strategies for Multi-Bracket Pools}
\author{Ryan S. Brill\thanks{Graduate Group in Applied Mathematics and Computational Science, University of Pennsylvania. Correspondence to: ryguy123@sas.upenn.edu}, \ Abraham J. Wyner\thanks{Department of Statistics and Data Science, The Wharton School, University of Pennsylvania}, \ and Ian J. Barnett\thanks{Department of Biostatistics, Perelman School, University of Pennsylvania}}
\begin{document}
\maketitle

\begin{abstract}
Much work in the parimutuel betting literature has discussed estimating event outcome probabilities or developing optimal wagering strategies, particularly for horse race betting.  Some betting pools, however, involve betting not just on a single event, but on a tuple of events.  For example, pick six betting in horse racing, March Madness bracket challenges, and predicting a randomly drawn bitstring each involve making a series of individual forecasts.  Although traditional optimal wagering strategies work well when the size of the tuple is very small (e.g., betting on the winner of a horse race), they are intractable for more general betting pools in higher dimensions (e.g., March Madness bracket challenges). Hence we pose the multi-brackets problem: supposing we wish to predict a tuple of events and that we know the true probabilities of each potential outcome of each event, what is the best way to tractably generate a set of $n$ predicted tuples?  The most general version of this problem is extremely difficult, so we begin with a simpler setting.  In particular, we generate $n$ independent predicted tuples according to a distribution having optimal entropy.  This entropy-based approach is tractable, scalable, and performs well. 


\end{abstract}




\section{Introduction}

\textit{Parimutuel betting} or \textit{pool betting} involves pooling together all bets of a particular type on a given event, deducting a track take or vigorish, and splitting the pot among all winning bets.
Prime examples are horse race betting and the March Madness bracket challenge (which involves predicting the winner of each game in the NCAA Division I men's basketball ``March Madness'' tournament).
Profitable parimutuel wagering systems have two components: a probability model of the event outcome and a bet allocation strategy.
The latter uses the outcome probabilities as inputs to a betting algorithm that determines the amount to wager on each potential outcome.
There is a large body of literature on estimating outcome probabilities for pool betting events.
For instance, we provide an overview of estimating outcome probabilities for horse races and college basketball matchups in Appendix~\ref{app:probEstimationDetails}.
There is also a large body of literature on developing optimal wagering strategies, particularly for betting on horse race outcomes.
Notably, assuming the outcome probabilities are known, \citet{Isaacs1965} and \citet{kelly1956} derive the amount to wager on each horse so as to maximize expected profit and expected log wealth, respectively.
\citet{rosner1975} derives a wagering strategy for a risk averse decision maker, and
\citet{Willis1964OptimumNS} and \citet{hausch1981} derive other wagering strategies.
On the other hand, there has been very limited work, and no literature to our knowledge, on deriving optimal strateges for generating multiple predicted March Madness brackets.
Existing work focuses on generating a single predicted bracket (see Appendix~\ref{app:prevWorkMMmultipleBrackets} for details).

Existing wagering strategies for pools that involve betting on the outcome of a single event (e.g., the winner of a horse race) have been successful.
For instance, \citet{benter} reported that his horse race gambling syndicate made \textit{significant} profits during its five year gambling operation.   
However, many betting pools in the real world involve betting not just on a single event, but on a \textit{tuple of events}.
For example, the pick six bet in horse racing involves predicting the winner of each of six horse races.
Also, the March Madness bracket challenge involves predicting the winner of each game in the NCAA Division I men's basketball tournament.
Another compelling example to the pure mathematician is predicting each of the bits in a randomly drawn bitstring.
In each of these three prediction contests, the goal is to predict as best as possible a tuple of events, which we call a \textit{bracket}.
We suppose it is permissible to generate multiple predicted brackets, so we call these contests \textit{multi-bracket pools}.
In developing wagering strategies for multi-bracket pools, the literature on estimating outcome probabilities for each event in the bracket still applies.
However, given these probabilities, the wagering strategy literature developed for betting on single events doesn't extend to general multi-bracket pools.
Although these methods work well in low dimensional examples such as betting on the winner of a horse race, they are intractable for general multi-bracket pools having larger dimension (e.g., March Madness bracket challenges); extensions of classical analytical solutions are exponential in the size of a bracket.

Hence we pose the \textit{multi-brackets problem}.
Suppose we wish to predict a bracket (a tuple of events) and suppose we know the true probabilities of each potential outcome of each event.
Then, what is the best way to tractably generate a set of $n$ predicted brackets?
More concretely, how can we construct a set of $n$ brackets that maximize an objective function such as expected score, win probability, or expected profit?
The most general version of the multi-brackets problem, which finds the optimal set of $n$ brackets across all such possible sets, is extremely difficult.
To make the problem tractable, possible, and/or able to be visualized, depending on the particular specification of the multi-bracket pool, we make simpifying assumptions.
First, we assume we (and optionally a field of opponents) predict i.i.d. brackets generated according to a bracket distribution.
The task becomes to find the optimal generating bracket distribution.
For higher dimensional examples (e.g., March Madness bracket challenges), we make another simplifying assumption, optimizing over a smartly chosen low dimensional subspace of generating bracket distributions.
In particular, we optimize over brackets of varying levels of entropy.
We find that this entropy-based approach is sufficient to generate well-performing sets of bracket predictions.
We also learn the following high-level lessons from this strategy: we should increase the entropy of our bracket predictions as $n$ increases and as our opponents increase entropy.

The remainder of this paper is organized as follows.
In Section~\ref{sec:formulateProblem} we formally introduce the multi-brackets problem.
Then in Section~\ref{sec:guessBitstring} we propose an entropy-based solution to what we consider a canonical example of a multi-bracket pool: guessing a randomly drawn bitstring.
Using this canonical example to aid our understanding of multi-bracket pools, in Section~\ref{sec:ep_sec} we make the connection between the multi-brackets problem and information theory, particularly through the Asymptotic Equipartition Property.
Then, in Section~\ref{sec:realWorldExamples} we propose entropy-based solutions to real world examples of multi-bracket pools, including the pick six bet in horse racing in Section~\ref{sec:ex_pickSix} and March Madness bracket challenges in Section~\ref{sec:ex_MM}.
We conclude in Section~\ref{sec:discussion}.

\section{The multi-brackets problem}\label{sec:formulateProblem}

In this section we formally introduce the multi-brackets problem.
The goal of a multi-bracket pool is to predict a tuple of $m$ outcomes $\tau = (\tau_1,...,\tau_m)$, which we call the ``true'' observed reference \textit{bracket}.
We judge how ``close'' a bracket prediction $x = (x_1,...,x_m)$ is to $\tau$ by a \textit{bracket scoring function} $f(x,\tau)$.
One natural form for the scoring function is
\begin{equation}
    f(x,\tau) = \sum_{i=1}^{m} w_i \cdot \one\{x_i = \tau_i\},
\label{eqn:bracketScoringFunctionWeighted}
\end{equation}
which is the number of outcomes predicted correctly weighted by $\{w_i\}_{i=1}^{m}$.
Another is
\begin{equation}
    f(x,\tau) = \one\{x = \tau\},
\label{eqn:bracketScoringFunctionExact}
\end{equation}
which is one if and only if the predicted bracket is exactly correct.
The contestants who submit the highest scoring brackets win the pool.

The multi-brackets problem asks the general question: if we could submit $n$ brackets to the pool, how should we choose which brackets to submit?
This question takes on various forms depending on the information available to us and the structure of a particular multi-bracket pool.
In the absence of information about opponents' predicted brackets, 
how should we craft our submitted set $\B_n$ of $n$ bracket predictions in order to maximize expected maximum score?
Formally, solve
\begin{align}
    \B_n^* :=& \argmax_{\{\B_n \subset \X: |B_n| = n\}} \E_{\tau} \bigg[ \max_{x \in \B_n} f(x,\tau) \bigg].
    \label{eqn:maximize_expected_score}
\end{align}
Or, assuming a field of opponents submits a set $\OO_k$ of $k$ bracket predictions to the pool according to some strategy, how should we craft our submitted set $\B_n$ of $n$ brackets in order to maximize our probability of having the best bracket?
Formally, solve
\begin{align}
    \B_n^* :=& \argmax_{\{\B_n \subset \X: |B_n| = n\}} \P_{\tau, \OO_k} \bigg[ \max_{x \in \B_n} f(x,\tau) \geq \max_{y \in \OO_k} f(y,\tau) \bigg].
    \label{eqn:maximize_win_prob}
\end{align}
Another version of a multi-bracket pool offers a \textit{carryover} $C$ of initial money in the pot, charges $b$ dollars per submitted bracket, and removes a fraction $\alpha$ from the pot as a track take or vigorish.
The total pool of money entered into the pot is thus
\begin{align}
    T = C + b(n+k)(1-\alpha),
    \label{eqn:totalMoney}
\end{align}
which is split among the entrants with the highest scoring brackets.
The question becomes: how should we craft our submitted set $\B_n$ of $n$ brackets in order to maximize expected profit?
Formally, solve
\begin{align}
    \B_n^* :=& \argmax_{\{\B_n \subset \X: |B_n| = n\}} T \cdot \P_{\tau, \OO_k} \bigg[ \max_{x \in \B_n} f(x,\tau) > \max_{y \in \OO_k} f(y,\tau) \bigg] - b\cdot n.
    \label{eqn:maximize_exp_profit}
\end{align}
This variant assumes no ties but is easily extended to incorporate ties (see Section~\ref{sec:ex_pickSix}).
The optimization problems in Equations \eqref{eqn:maximize_expected_score}, \eqref{eqn:maximize_win_prob}, and \eqref{eqn:maximize_exp_profit} and related variants define the multi-brackets problem. 

In upcoming sections we explore specific examples of the multi-brackets problem.
In guessing a randomly drawn bitstring (Section~\ref{sec:guessBitstring}) and the March Madness bracket challenge (Section~\ref{sec:ex_MM}) we explore the multi-brackets problem via scoring function \ref{eqn:bracketScoringFunctionWeighted} and objective functions \ref{eqn:maximize_expected_score} and \ref{eqn:maximize_win_prob}.
In pick six betting in horse racing (Section~\ref{sec:ex_pickSix}) we explore the multi-brackets problem via scoring function \ref{eqn:bracketScoringFunctionExact} and objective function \ref{eqn:maximize_exp_profit}.

The most general version of the multi-brackets problem, which finds the optimal set of $n$ brackets across all such possible sets, is extremely difficult.
To make the problem tractable, possible, and/or able to be visualized, depending on the particular specification of the multi-bracket pool, we make simpifying assumptions.
We assume we (and the field of opponents) submit i.i.d. brackets generated from some bracket distribution.
As the size of a bracket increases, solving the multi-brackets problem under this assumptions quickly becomes intractable, so we optimize over smartly chosen low dimensional subspaces of bracket distributions.
We find this entropy-based strategy is sufficient to generate well-performing sets of submitted brackets.

\section{Canonical example: guessing a randomly drawn bitstring}\label{sec:guessBitstring}

In this section we delve into what we consider a canonical example of a multi-bracket pool: guessing a randomly drawn bitstring.
In this contest, we want to predict the sequence of bits in a reference bitstring, which we assume is generated according to some known probability distribution. 
We submit $n$ guesses of the reference bitstring with the goal of being as ``close'' to it as possible or of being ``closer'' to it than a field of $k$ opponents' guesses, according to some distance function.
With some assumptions on the distribution $\bP$ from which the reference bitstring is generated, the distribution $\bQ$ from which we generate bitstring guesses, and the distribution $\bR$ from which opponents generate bitstring guesses, expected maximum score and win probability are analytically computable and tractable.
By visualizing these formulas we discern high-level lessons relevant to all multi-bracket pools.
To maximize the expected maximum score of a set of $n$ submitted randomly drawn brackets, we should increase the entropy of our submitted brackets as $n$ increases.
To maximize the probability that the maximum score of $n$ submitted randomly drawn brackes exceeds that of $k$ opposing brackets, we should increase the entropy of our brackets as our opponents increase entropy.

The objective of this multi-bracket pool is to predict a randomly drawn bitstring, which is to predict a sequence of bits.
Here, a bracket is a bitstring 
consisting of $m$ bits divided into $R$ rounds with $m_{\rd}$ bits in each round $\rd \in \{1,...,R\}$.
For concreteness we let there be $m_{\rd} = 2^{R - \rd}$ bits in each of $R = 6$ rounds (i.e., $32$ bits in round $1$, $16$ bits in round $2$, $8$ bits in round $3$, ..., $1$ bit in round $6$, totaling $63$ bits), but the analysis in this Section holds for other choices of $m_{\rd}$ and $R$.
The ``true'' reference bracket that we are trying to predict is a bitstring $\tau = (\tau_{\rd,i} : 1\leq \rd\leq R, 1 \leq i \leq m_\rd)$.
A field of opponents submits $k$ guesses of $\tau$, the brackets $(y^{(1)},...,y^{(k)})$, where each bracket is a bitstring $y^{(\ell)} = (y^{(\ell)}_{\rd,i} : 1\leq \rd\leq R, 1 \leq i \leq m_\rd)$. 
We submit $n$ guesses of $\tau$, the brackets $(x^{(1)},...,x^{(n)})$, where each bracket is a bitstring $x^{(j)} = (x^{(j)}_{\rd,i} : 1\leq \rd\leq R, 1 \leq i \leq m_\rd)$.
The winning submitted bracket among $\{x^{(j)}\}_{j=1}^{n} \cup \{y^{(\ell)}\}_{\ell=1}^{k}$ is ``closest'' to the reference bracket $\tau$ according to a scoring function $f(x,\tau)$ measuring how ``close'' $x$ is to $\tau$.
Here, we consider 
\begin{equation}
    f(x,\tau) = \sum_{\rd=1}^{R} \sum_{i=1}^{m_{\rd}} w_{\rd,i} \cdot \one\{x_{\rd,i} = \tau_{\rd,i}\},
    \label{eqn:espn_score}
\end{equation}
which is the weighted number of bits guessed correctly.
This scoring function encompasses both \textit{Hamming score} and \textit{ESPN score}.
Hamming score measures the number of bits guessed correctly, weighing each bit equally ($w_{\rd,i} \equiv 1$). 
ESPN score weighs each bit by $w_{\rd,i} = 10\cdot 2^{\rd-1}$ so that the maximum accruable score in each round is the same ($10\cdot2^{R-1}$).

Suppose the true reference bitstring $\tau$ is generated according to some known distribution $\bP$ and opponents' bitstrings are generated according to some known distribution $\bR$.
Our task is to submit $n$ predicted bitstrings so as to maximize expected maximum score
\begin{align}
    \E\bigg[\max_{j=1,...,n} f(x^{(j)},\tau)\bigg]
    \label{eqn:SMM_eMaxScore}
\end{align}
or the probability that we don't lose the bracket challenge
\begin{align}
    \P\bigg[ \max_{j=1,...,n} f(x^{(j)},\tau) \geq \max_{\ell=1,...,k} f(y^{(\ell)},\tau) \bigg].
    \label{eqn:SMM_wp}
\end{align}
In particular, we wish to submit $n$ bitstrings generated according to some distribution $\bQ$, and it is our task to find suitable $\bQ$.
For tractability, we consider the special case that bits are drawn independently with probabilities varying by round.
We suppose that each bit $\tau_{\rd,i}$ in the reference bitstring is an independently drawn $\text{Bernoulli}(p_\rd)$ coin flip.
The parameter $p_\rd \in [0.5, 1]$ controls the entropy of the contest: lower values correspond to a higher entropy (more variable) reference bitstring that is harder to predict.
By symmetry, our strategy just needs to vary by round.
So, we assume that each of our submitted bits $x^{(j)}_{\rd,i}$ is an independently drawn $\text{Bernoulli}(q_\rd)$ coin flip and each of our opponents' submitted bits $y^{(\ell)}_{\rd,i}$ is an independently drawn $\text{Bernoulli}(r_\rd)$ coin flip.
The parameters $q_\rd$ and $r_\rd$ control the entropy of our submitted bitstrings and our opponents' submitted bitstrings, respectively.
Our task is to find the optimal strategy or entropy level $(q_\rd)_{\rd=1}^{R}$.
In this setting, expected maximum score and win probability are analytically computable and tractable (see Appendix~\ref{app:smm_details}). 

We first visualize the case where the entropy of the reference bitstring, our submitted bitstrings, and our opponents' submitted bitstrings don't vary by round: $p \equiv p_\rd$, $q \equiv q_\rd$, and $r \equiv r_\rd$.
In Figure~\ref{fig:plot_smm_eMaxScore} we visualize the expected maximum Hamming score of $n$ submitted bitstrings as a function of $p$, $q$, and $n$.
We find that we should increase the entropy of our submitted brackets (decrease $q$) as $n$ increases, transitioning from pure ``chalk'' ($q=1$) for $n=1$ bracket to the true amount of randomness ($q=p$) for large $n$.
Specifically, for small $n$ the green line $q = p$ lies below the blue lines (large $q$), and for large $n$ the green line lies above all the other lines.

\begin{figure}[htb!]
    \centering
    \includegraphics[width=17cm]{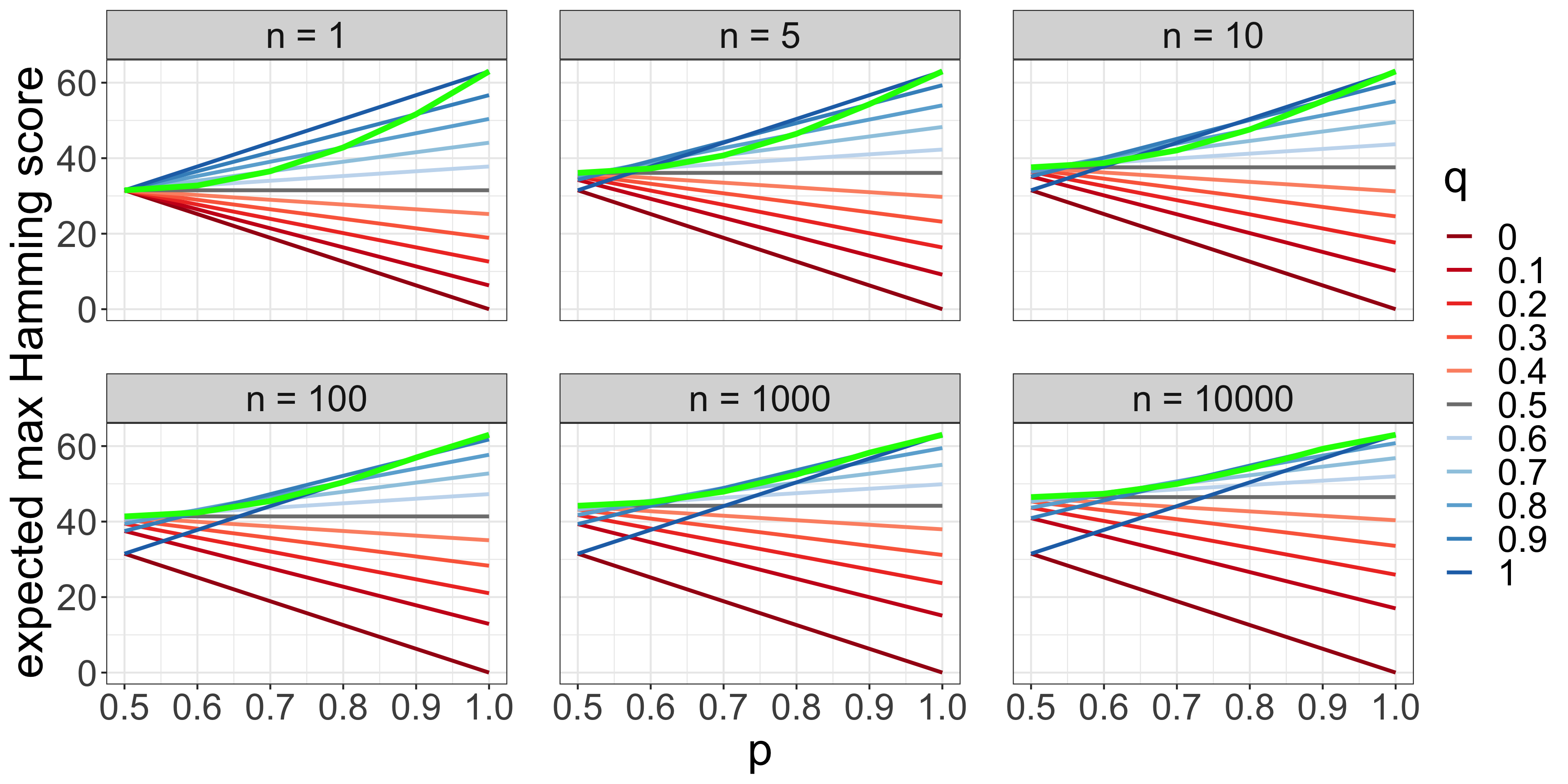}
    \caption{
        The expected maximum Hamming score ($y$-axis) of $n$ submitted Bernoulli($q$) bitstrings relative to a reference Bernoulli($p$) bitstring as a function of $p$ ($x$-axis), $q$ (color), and $n$ (facet) in the ``guessing a randomly drawn bitstring'' contest with $p \equiv p_\rd$, $q \equiv q_\rd$, $r \equiv r_\rd$, and $R=6$ rounds.
        As $n$ increases, we want to increase the entropy of our submitted brackets.
    }
    \label{fig:plot_smm_eMaxScore}
\end{figure}

In Figure~\ref{fig:plot_wpHammingScore_k100} we visualize win probability as a function of $q$, $r$, and $n$ for $k=100$ and $p=0.75$.
The horizontal gray dashed line $q=p=0.75$ represents that we match the entropy of the reference bitstring, the vertical gray dashed line $r=p=0.75$ represents that our opponents match the entropy of the reference bitstring, and the diagonal gray dashed line $q=r$ represents that we match our opponents' entropy.
We should increase entropy (decrease $q$) as $n$ increases, visualized by the green region moving downwards as $n$ increases.
Further, to maximize win probability, we should increase entropy (decrease $q$) as our opponents' entropy increases (as $r$ decreases), visualized by the triangular form of the green region.
In other words, we should tailor the entropy of our brackets to the entropy of our opponents' brackets.
These trends are similar for other values of $k$ and $n$ (see Figure~\ref{fig:wp_smm_varyK} of Appendix~\ref{app:smm_details}).

\begin{figure}[htb!]
    \centering
    \includegraphics[width=\textwidth]{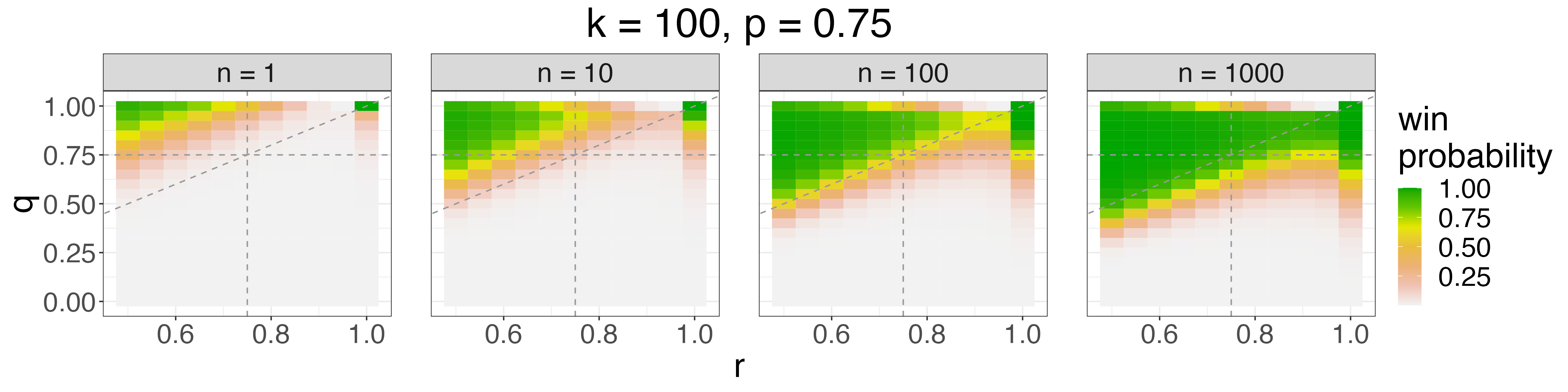}
    \caption{
        The probability (color) that the maximum Hamming score of $n$ submitted Bernoulli($q$) brackets relative to a reference Bernoulli($p$) bracket exceeds that of $k$ opposing Bernoulli($r$) brackets as a function of $q$ ($y$-axis), $r$ ($x$-axis), and $n$ (facet) for $p=0.75$ and $k=100$ in the ``guessing a randomly drawn bitstring'' contest with $p \equiv p_\rd$, $q \equiv q_\rd$, $r \equiv r_\rd$, and $R=6$ rounds.
        We should increase entropy as $n$ increases and as our opponents' entropy increases.
    }
    \label{fig:plot_wpHammingScore_k100}
\end{figure}

These trends generalize to the case where the entropy of each bitstring varies by round (i.e., general $q_\rd$, $r_\rd$, and $p_\rd$).
It is difficult to visualize the entire $R=6$ dimensional space of $p=(p_1,...,p_6)$, $q=(q_1,...,q_6)$, and $r=(r_1,...,r_6)$ so we instead consider a lower dimensional subspace.
Specifically we visualize a 2 dimensional subspace of $q$ parameterized by $(q_E,q_L)$, where $q_E$ denotes $q$ in early rounds and $q_L$ denotes $q$ in later rounds.
For example, $q_E = q_1 = q_2 = q_3$ and $q_L = q_4 = q_5 = q_6$ is one of the five possible partitions of $(q_E,q_L)$.
We similarly visualize a 2 dimensional subspace of $r$ parameterized by $(r_E,r_L)$.
Finally, we let the reference bitstring have a constant entropy across each round, $p_\rd \equiv p$.

In Figure~\ref{fig:plot_smm_eMaxScoreRd} we visualize the expected maximum ESPN score of $n$ bitstrings as a function of $q_E$, $q_L$, and $n$ for $p=0.75$.
The three columns display the results for $n=1$, $n=10$, and $n=100$, respectively.
The five rows display the results for the five partitions of $(q_E,q_L)$.
For instance, the first row shows one partition $q_E = q_1$ and $q_L = q_2 = q_3 = q_4 = q_5 = q_6$.
As $n$ increases, the expected maximum ESPN score increases.
We visualize this as the lines moving upwards as we move right across the grid of plots.
As $E$ increases (i.e., as $q_E$ encompasses a larger number of early rounds), the impactfulness of the late round strategy $q_L$ decreases.
We visualize this as the lines becoming more clumped together as we move down the grid of plots in Figure~\ref{fig:plot_smm_eMaxScoreRd}.
For $n=1$, the best strategy is pure chalk ($q_E=1$, $q_L=1$), and as $n$ increases, the optimal values of $q_E$ and $q_L$ decrease.
In other words, as before, we want to increase the entropy of our submitted brackets as $n$ increases.
We visualize this as the circle (i.e., the best strategy in each plot) moving leftward and having a more reddish color as $n$ increases.

\begin{figure}[htb!]
    \centering
    \includegraphics[width=17cm]{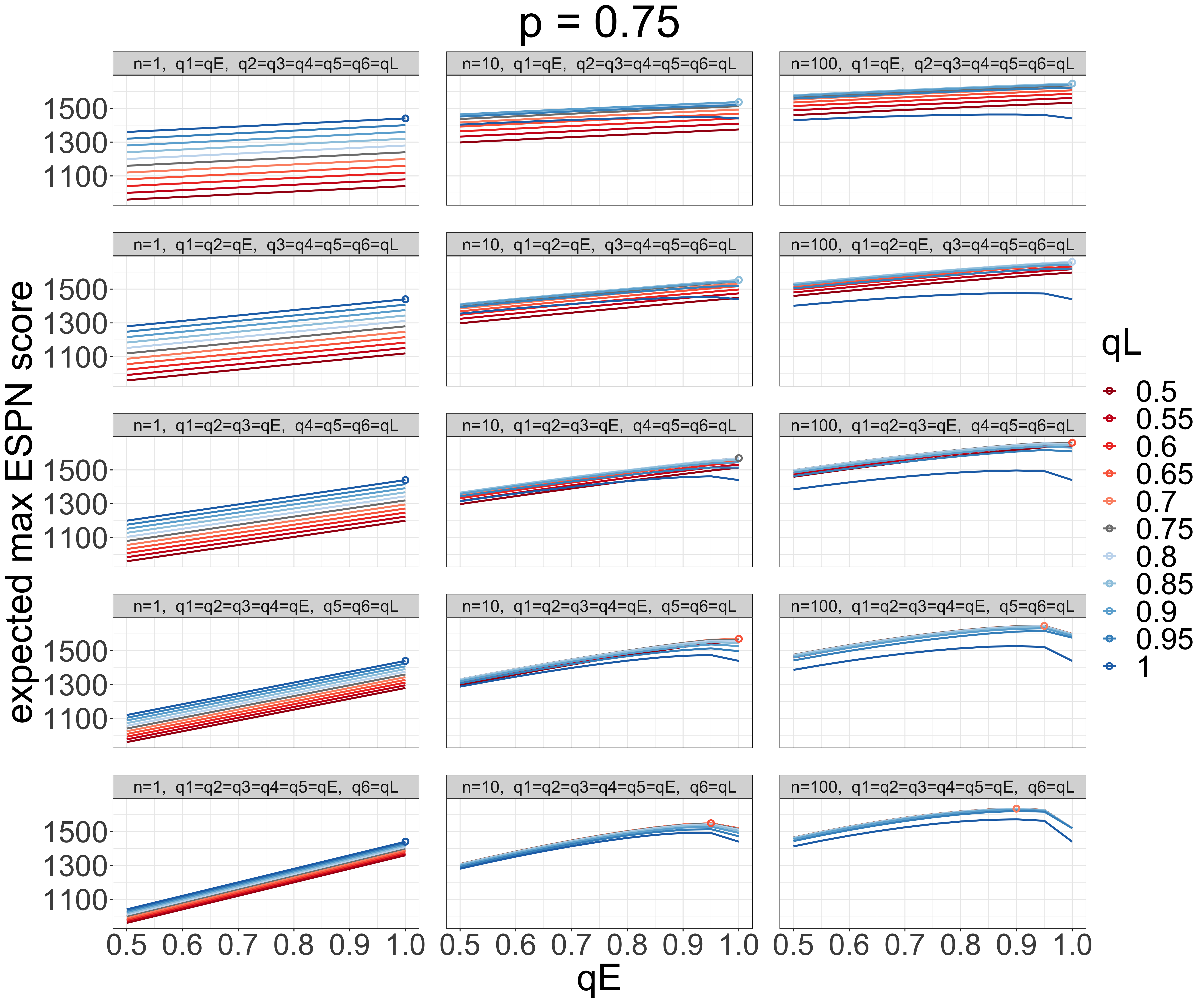}
    \caption{
        The expected maximum ESPN score ($y$-axis) of $n$ submitted bitstrings, with Bernoulli($q_E$) bits in early rounds and Bernoulli($q_L$) bits in later rounds, relative to a reference Bernoulli($p$) bitstring as a function of $q_E$ ($x$-axis), $q_L$ (color), $n$ (columns), and the partition $(q_E, q_L)$ (rows) in the ``guessing a randomly drawn bitstring'' contest with $R=6$ rounds and $p=0.75$.
        The circles indicates the best strategy in each setting.
        As $n$ increases, we want to increase the entropy of our bracket predictions in both early and late rounds.
    }
    \label{fig:plot_smm_eMaxScoreRd}
\end{figure}

\begin{figure}[hbt!]
\centering
\subfloat[]{
    \includegraphics[width = .5\textwidth]{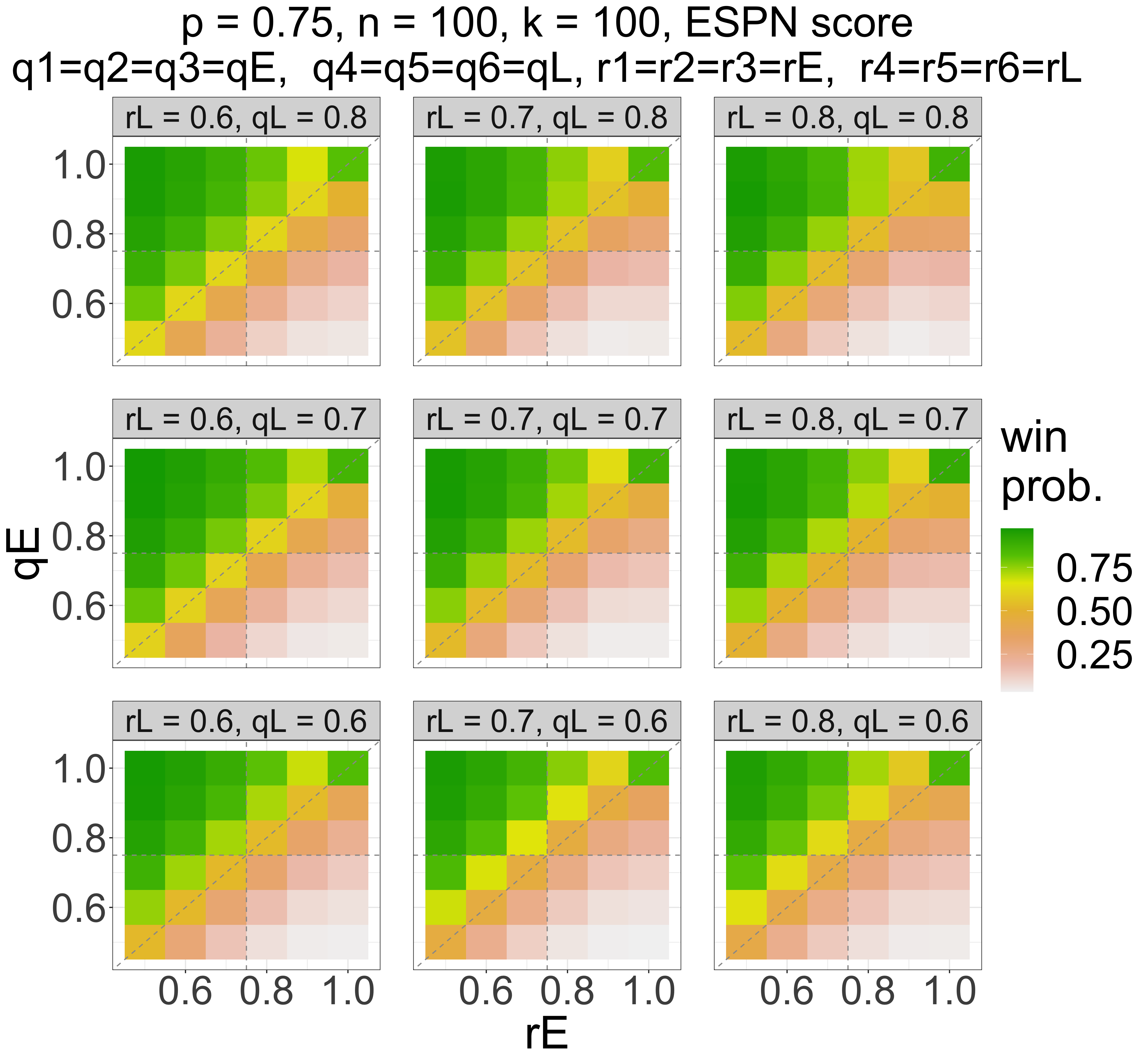}
    \label{fig:plot_smm_rd_wp_1}
} 
\subfloat[]{
    \includegraphics[width = .5\textwidth]{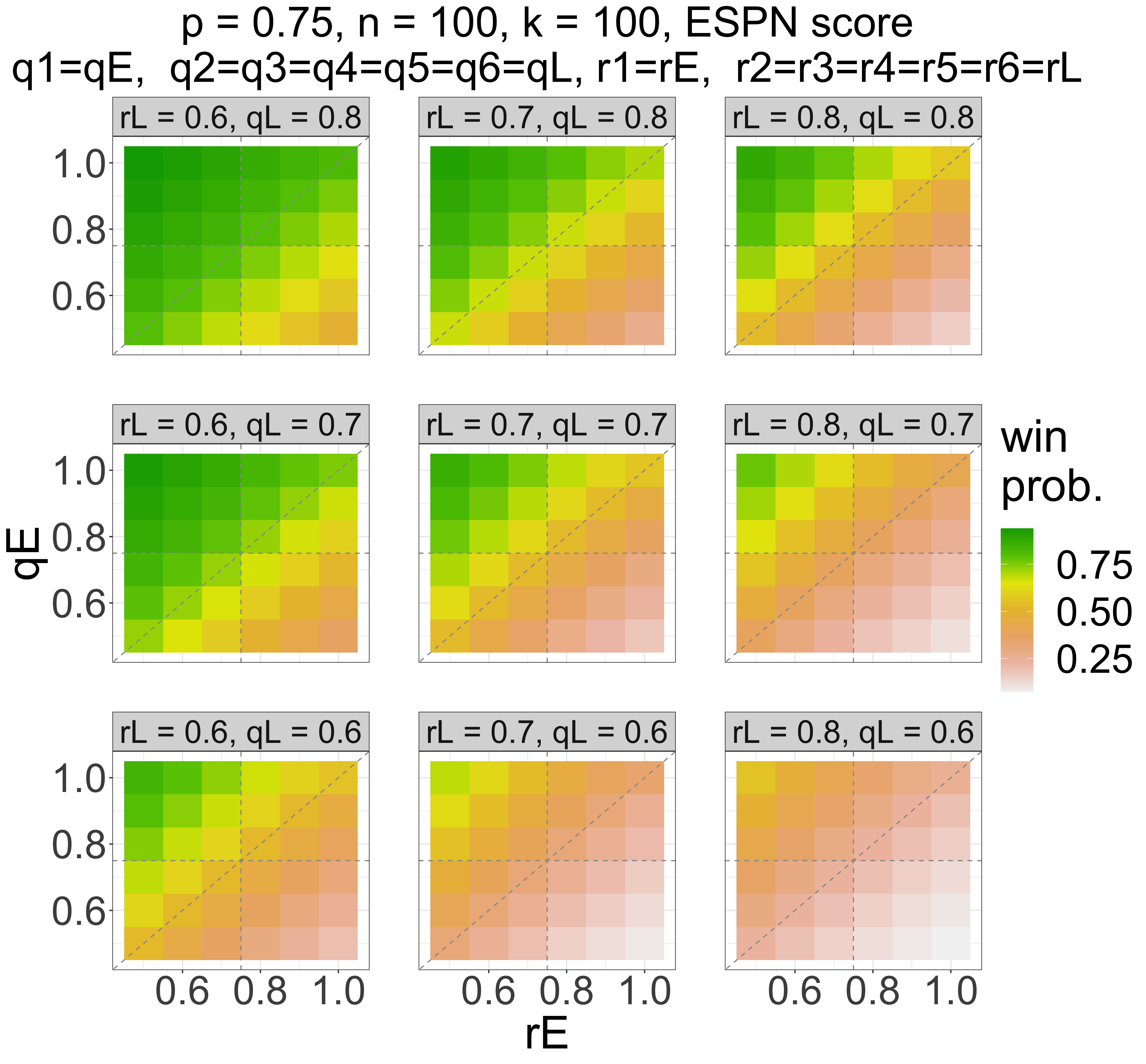}
    \label{fig:plot_smm_rd_wp_2}
} 
\caption{
    The probability (color) that the maximum ESPN score of $n$ bitstrings, with Bernoulli($q_E$) bits in early rounds and Bernoulli($q_L$) bits in later rounds, relative to a reference Bernoulli($p$) bitstring exceeds that of $k$ opposing bitstrings, with Bernoulli($r_E$) bits in early rounds and Bernoulli($r_L$) bits in later rounds, as a function of $q_E$ ($y$-axis), $r_E$ ($x$-axis), $q_L$ (rows), and $r_L$ (columns) for $p=0.75$, $k=100$. and $n=100$ in the ``guessing a randomly drawn bitstring'' contest with $R=6$ rounds.
    Figure (a) uses the partition where the first three rounds are the early rounds (e.g., $q_E = q_1 = q_2 = q_3$ and $r_E = r_1 = r_2 = r_3$) and Figure (b) uses the partition where just the first round is an early round (e.g., $q_E = q_1$ and $r_E = r_1$). 
    We should still increase the entropy of our bracket predictions as our opponents increase entropy.
}
\label{fig:plot_smm_wpScore_rd}
\end{figure}

In Figure~\ref{fig:plot_smm_wpScore_rd} we visualize win probability as a function of $q_E$, $q_L$, $r_E$, and $r_L$, for $n = k = 100$, $p = 0.75$, and ESPN score.
Figure~\ref{fig:plot_smm_rd_wp_1} uses the partition where the first three rounds are the early rounds (e.g., $q_E = q_1 = q_2 = q_3$ and $r_E = r_1 = r_2 = r_3$).
In this scenario, early round strategy $q_E$ and $r_E$ is much more impactful than late round strategy $q_L$ and $r_L$.
We visualize this as each sub-plot looking the same. 
The green triangle within each subplot illustrates that we should increase early round entropy (decrease $q_E$) as our opponents' early round entropy increases (i.e., as $r_E$ decreases).
Figure~\ref{fig:plot_smm_rd_wp_2} uses the partition where just the first round is an early round (e.g., $q_E = q_1$ and $r_E = r_1$). 
In this scenario, both early round strategy $q_E$ and $r_E$ and late round strategy $q_L$ and $r_L$ are impactful.
The green triangle appears again in each suplot, illustrating that we should increase early round entropy as our opponents' early round entropy increases.
But the green triangle grows as $r_L$ decreases, indicating that we should increase late round entropy (decrease $q_E$) as our opponent's entropy increases.

\section{An information theoretic view of the multi-brackets problem}\label{sec:ep_sec}



The multi-brackets problem is intimately connected to Information Theory.
Viewing the multi-brackets problem under an information theoretic lens provides a deeper understanding of the problem and elucidates why certain entropy-based strategies work.
In particular, the Asymptotic Equipartition Property from Information Theory helps us understand why it makes sense to increase entropy as the number of brackets increases and as our opponents' entropy increases.
In this section we give an intuitive explanation of the Equipartition Property and discuss implications, relegating the formal mathematical details to Appendix~\ref{app:EP_thm}.

To begin, we partition the set of all brackets $\X$ into three subsets, 
\begin{equation}
    \begin{cases}
    \text{low entropy ``chalky'' brackets} \ \ \qquad \C \subset \X, \\
    \text{``typical'' brackets} \qquad\qquad\qquad\qquad \T \subset \X, \\
    \text{high entropy ``rare'' brackets} \ \qquad \quad \R \subset \X.
    \end{cases}
\end{equation}
We visualize this partition of $\X$ under three lenses in Figure~\ref{fig:3_sets_3_lenses}. 

\begin{figure}[htb!]
    \centering
    \includegraphics[width=10cm]{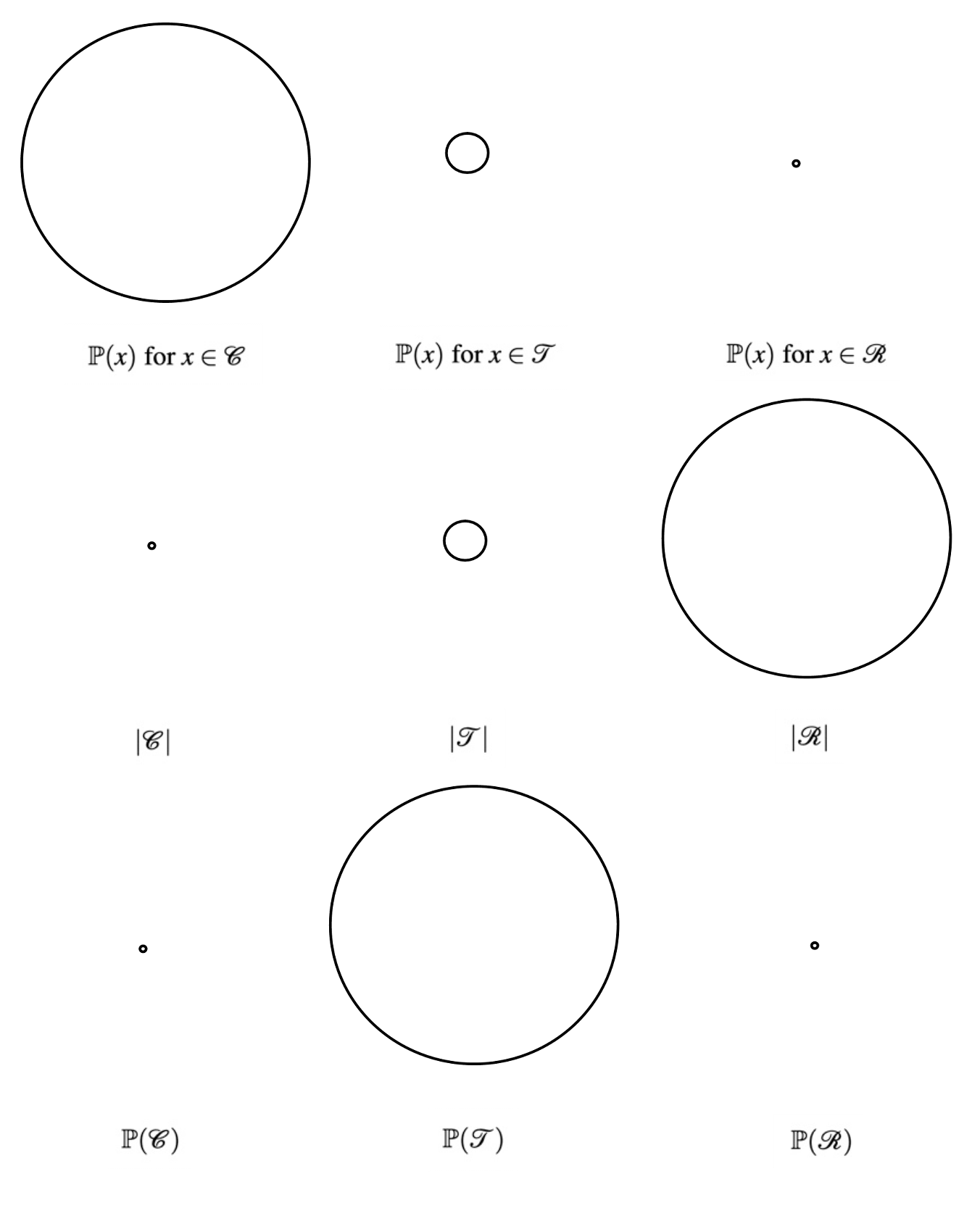}
    \caption{
        Note that these figures are not drawn to scale.
        First line: the probability mass of an individual low entropy (chalky) bracket is much larger than the probability mass of an individual typical bracket, which is much larger than the probability mass of an individual high entropy (rare) bracket. Second line:  there are exponentially more rare brackets than typical brackets, and there are exponentially more typical brackets than chalky brackets. Third line: the typical brackets occupy most of the probability mass on aggregate.
    }
    \label{fig:3_sets_3_lenses}
\end{figure}

First, the probability mass of an individual low entropy or ``chalky'' bracket is much larger than the probability mass of an individual typical bracket, which is much larger than the probability mass of an individual high entropy or ``rare'' bracket. 
In symbols, if $x_1 \in \C, x_2 \in \T, \text{ and } x_3 \in \R$, then $\P(x_1) >> \P(x_2) >> \P(x_3)$.
``Rare'' is a good name for high entropy brackets because they are highly unlikely.
``Chalk'', a term from sports betting, is a good name for low entropy brackets because it refers to betting on the heavy favorite (i.e., the outcome with highest individual likelihood).
Most of the individual forecasts within a low entropy bracket must consist of the most probable outcomes.
For example, in the ``guessing a bitstring'' contest, assuming the reference bitstring consists of independent Bernoulli($p$) bits where $p > 0.5$, low entropy brackets are bitstrings consisting mostly of ones.
In real world examples of multi-bracket pools, people are drawn to these low entropy chalky brackets because they have high individual likelihoods.

Second, there are exponentially more rare brackets than typical brackets, and there are exponentially more typical brackets than chalky brackets. In symbols, $|\R| >> |\T| >> |\C|$.
In the ``guessing a bitstring'' contest with $p > 0.5$, the overwhelming majority of possible brackets are high entropy brackets having too many zeros, and very few possible brackets are low entropy brackets consisting almost entirely of ones.
Typical brackets tow the line, having the ``right'' amount of ones.
March Madness is analagous: the overwhelming majority of possible brackets are rare brackets with too may upsets (e.g., a seed above $8$ winning the tournament) and relatively few possible brackets are chalky brackets with few upsets (there are only so many distinct brackets with favorites winning nearly all the games).
Typical brackets tow the line, having the ``right'' number of upsets. 

Lastly, the typical set of brackets contains most of the probability mass.
In symbols, $\P(\T) >> \P(\C)$ and $\P(\T) >> \P(\R)$.
This is a consequence of the previous two inequalities. 
Although $|\R|$ is massive, $\P(x)$ for $x \in \R$ is so small that $\P(\R)$ is small. 
Also, although $\P(x)$ for $x \in \C$ is relatively large, $|\C|$ is so small that $\P(\C)$ is small. 
Hence, the remainder of the probability mass, $\P(\T)$, is large.
``Typical'' is thus a good name for brackets whose entropy isn't too high or too low because a randomly drawn bracket typically has this ``right'' amount of entropy.
For example, the observed March Madness tournament is almost always a typical bracket featuring a ``typical'' number of upsets.

Drilled down to its essence, the Equipartition Property tells us that, as the number of forecasts $m$ within each bracket grows, the probability mass of the set of brackets becomes increasingly more concentrated in an exponentially small set, the ``typical set.''
See Appendix~\ref{app:EP_thm} for a more formal treatment of the Equipartition Property.

This information theoretic view of the multi-brackets problem sets up a tradeoff between chalky and typical brackets.
Typical brackets have the ``right'' entropy but consist of less likely individual outcomes, whereas chalky low entropy brackets have the ``wrong'' entropy but consist of more likely individual outcomes.
The former excels when $n$ is large, the latter excels when $n$ is small, and for moderate $n$ we interpolate between these two regimes; so, we should increase the entropy of our set of predicted brackets as the number of brackets $n$ increases.
We justify this below using the Equipartition Property. 

As the typical set contains most of the probability mass, the reference bracket is highly likely a typical bracket.
So when $n$ is large we should generate typical brackets as guesses since it is likely that at least one of these guesses is close to the reference bracket.
When $n$ is small, generating typical brackets as guesses doesn't produce as high an expected maximum score as chalky brackets.
To understand, recall that a bracket consists of $m$ individual forecasts.
A single randomly drawn typical bracket has the same entropy as the reference bracket but isn't likely to correctly predict each individual forecast.
For instance, in our ``guessing a bitstring'' example, a single randomly drawn bitstring has on average a similar number of ones as the reference bitstring, but not the \textit{right} ones in the right locations.
A chalky bracket, on the other hand, predicts highly likely outcomes in most of the individual forecasts.
The chalkiest bracket, which predicts the most likely outcome in each individual forecast, matches the reference bracket for each forecast in which the reference bracket realizes its most likely outcome.
This on average yields more matches than that of a typical bracket because more forecasts realize their most likely outcome than any other single outcome.
For instance, in our ``guessing a bitstring'' example, a chalky bracket consists mostly of ones (assuming $p > 0.5$) and so correctly guesses the locations of ones in the reference bitstring.
This is better on average than guessing a typical bracket, which has on average has the right number of ones but in the wrong locations.


\section{Real world examples}\label{sec:realWorldExamples}

Now, we discuss real world examples of multi-bracket pools: pick six betting in horse racing and March Madness bracket challenges.
Both contests involve predicting a tuple of outcomes. 
An individual pick six bet (ticket) involves predicting the winner of each of six horse races and an individual March Madness bet (bracket) involves predicting the winner of each game in the NCAA Division I Men's Basketball ``March Madness'' tournament.
In both contests it is allowed, but not necessarily commonplace (outside of horse racing betting syndicates), to submit many tickets or brackets.
We demonstrate that the entropy-based strategies introduced in the previous sections are particularly well-suited for these problems.
In particular, optimizing over strategies of varying levels of entropy is tractable and yields well-performing solutions.

\subsection{Pick six horse race betting}\label{sec:ex_pickSix}

Horse race betting is replete with examples of multi-bracket pools.
A prime example is the \textit{pick six} bet, which involves correctly picking the winner of six horse races.
Similar pick three, pick four, and pick five bets, which involve correctly picking the winner of three, four, or five horse races, respectively, also exist.
Due to the immense difficulty of picking six consecutive horse race winners coupled with a large number of bettors in these pools, payoffs for successful pick six bets can be massive (e.g., in the millions of dollars).
In this section we apply our entropy-based strategies to pick six betting, demonstrating the massive profit potential of these bets.

To begin, let $s \in \{3,4,5,6\}$ denote the number of races comprising the pick-$s$ bet (for the pick three, four, five, and six contests, respectively).
Suppose for simplicity that one pick-$s$ ticket, consisting of $s$ predicted horse race winners, costs $\$1$ each (typically a pick-$s$ bet costs $\$1$ or $\$2$).
Indexing each race by $j=1,...,s$, suppose there are $m_j$ horses in race $j$, and let $\bm = (m_1,...,m_s)$.
There is a fixed carryover $C$, an amount of money leftover from previous betting pools in which no one won, that is added to the total prize pool for the pick-$s$ contest.
As done throughout this paper, assume the true win probability $\bP_{ij}$ that horse $i$ wins race $j$ is known for each $i$ and $j$.
As our operating example in this section, we set $\bP$ to be the win probabilities implied by the Vegas odds from the pick six contest from Belmont Park on May 21, 2023,\footnote{
    \url{https://entries.horseracingnation.com/entries-results/belmont-park/2023-05-21}
} which we visualize in Figure~\ref{fig:P_picksix}.

\begin{figure}[htb!]
    \centering
    \includegraphics[width=10cm]{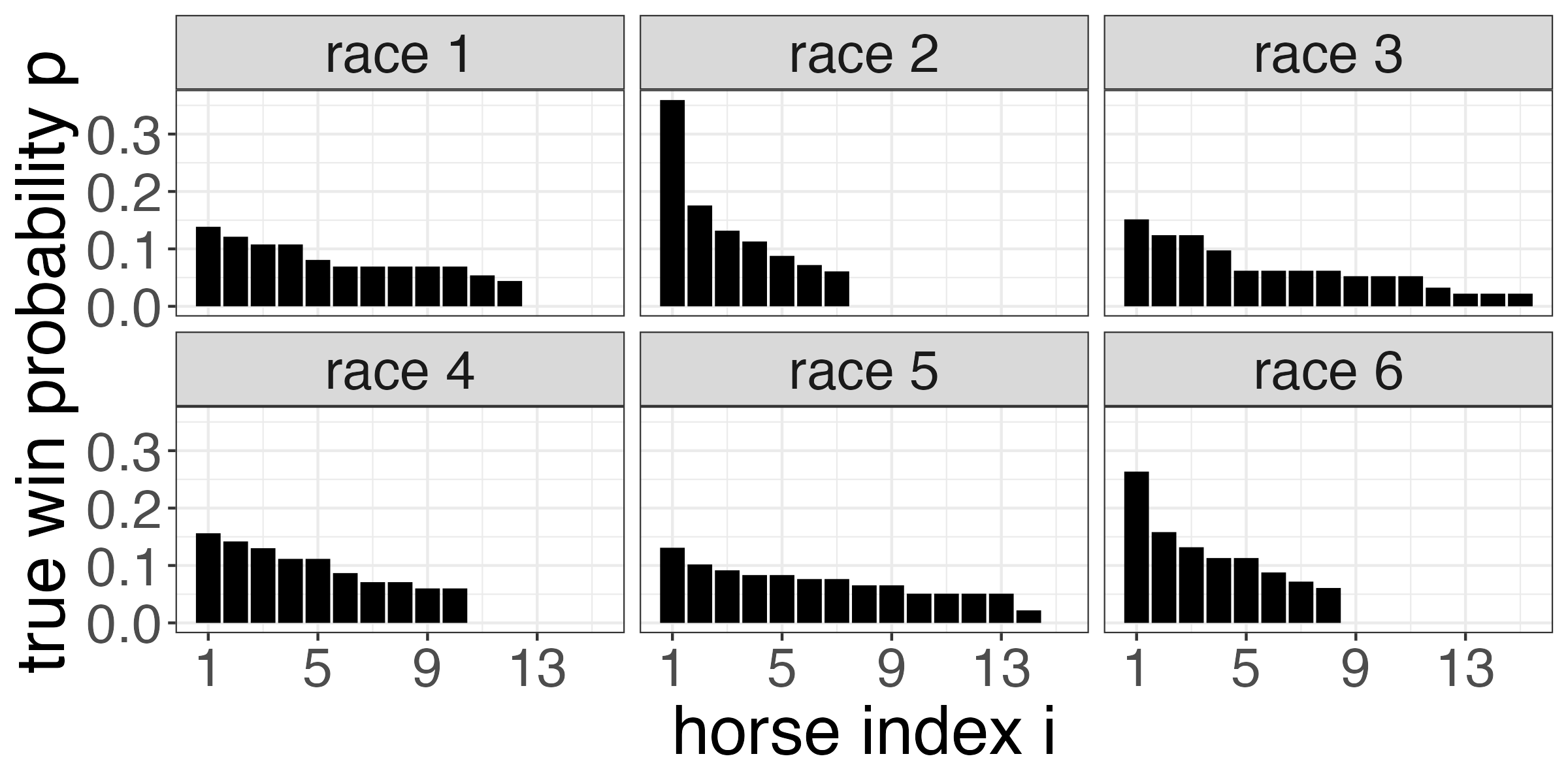}
    \caption{
        The ``true'' win probability $\bP_{ij}$ ($y$-axis) that horse $i$ ($x$-axis) wins race $j$ (facet) for each $i,j$.  These probabilities are implied by the Vegas odds for the pick six contest at Belmont Park on May 21, 2023.
    }
    \label{fig:P_picksix}
\end{figure}

Suppose the public purchases $k$ entries according to some strategy.
In particular, we assume the public submits $k$ independent tickets according to $\bR$, where $\bR_{ij}$ is the probability an opponent selects horse $i$ to win race $j$.
We purchase $n$ entries according to strategy $\bQ$. 
Specifically, we submit $n$ independent tickets according to $\bQ$, where $\bQ_{ij}$ is the probability we select horse $i$ to win race $j$.
The total prize money is thus 
\begin{equation}
    T = C + (n+k)(1-\alpha),
\label{eqn:pick_six_total_money}
\end{equation}
where $\alpha$ is the track take (vigorish).
Let $W$ be our number of winning tickets and let $\Wopp$ be our opponents' number of winning tickets.
Under our model, both $W$ and $\Wopp$ are random variables.
Formally, denote the ``true'' observed $s$ winning horses by $\tau = (\tau_1,...,\tau_s)$, our $n$ tickets by $(x^{(1)},...,x^{(n)})$ where each $x^{(\ell)} = (x^{(\ell)}_1,...,x^{(\ell)}_s)$, and the publics' $k$ tickets by $(y^{(1)},...,y^{(k)})$ where each  $y^{(\ell)} = (y^{(\ell)}_1,...,y^{(\ell)}_s)$.
Then
\begin{equation}
    W = \sum_{\ell=1}^{n} \one\{x^{(\ell)} = \tau\} = \sum_{\ell=1}^{n} \one\{x^{(\ell)}_1 = \tau_1,...,x^{(\ell)}_s = \tau_s\}
\label{eqn:pick_six_W}
\end{equation}
and
\begin{equation}
    \Wopp = \sum_{\ell=1}^{k} \one\{y^{(\ell)} = \tau\} = \sum_{\ell=1}^{k} \one\{y^{(\ell)}_1 = \tau_1,...,y^{(\ell)}_s = \tau_s\}.
\label{eqn:pick_six_W_opp}
\end{equation}
Then the amount we profit is also a random variable,
\begin{equation}
    \text{Profit} = \bigg( \frac{W}{W + \Wopp} \bigg)T - n,
\label{eqn:pick_six_profit}
\end{equation}
where we treat $\frac{0}{0}$ to be 0 (i.e., if both $W=0$ and $\Wopp=0$, the fraction $W/(W+\Wopp)$ is 0).
Here, the randomness is over $\tau \sim \bP$, $x \sim \bQ$, and $y \sim \bR$.

Our task is to solve for the optimal investment strategy $\bQ$ given all the other variables $n$, $k$, $\bP$, $\bR$, $C$, and $\alpha$.
Formally, we wish to maximize expected profit,
\begin{equation}
    \E[\text{Profit}] = -n + T\cdot\E\bigg( \frac{W}{W + \Wopp} \bigg).
\label{eqn:pick_six_Eprofit}
\end{equation}
In Appendix~\ref{app:pick_six_details} we compute a tractable lower bound for the expected profit.

\begin{figure}[htb!]
    \centering
    \includegraphics[width=10cm]{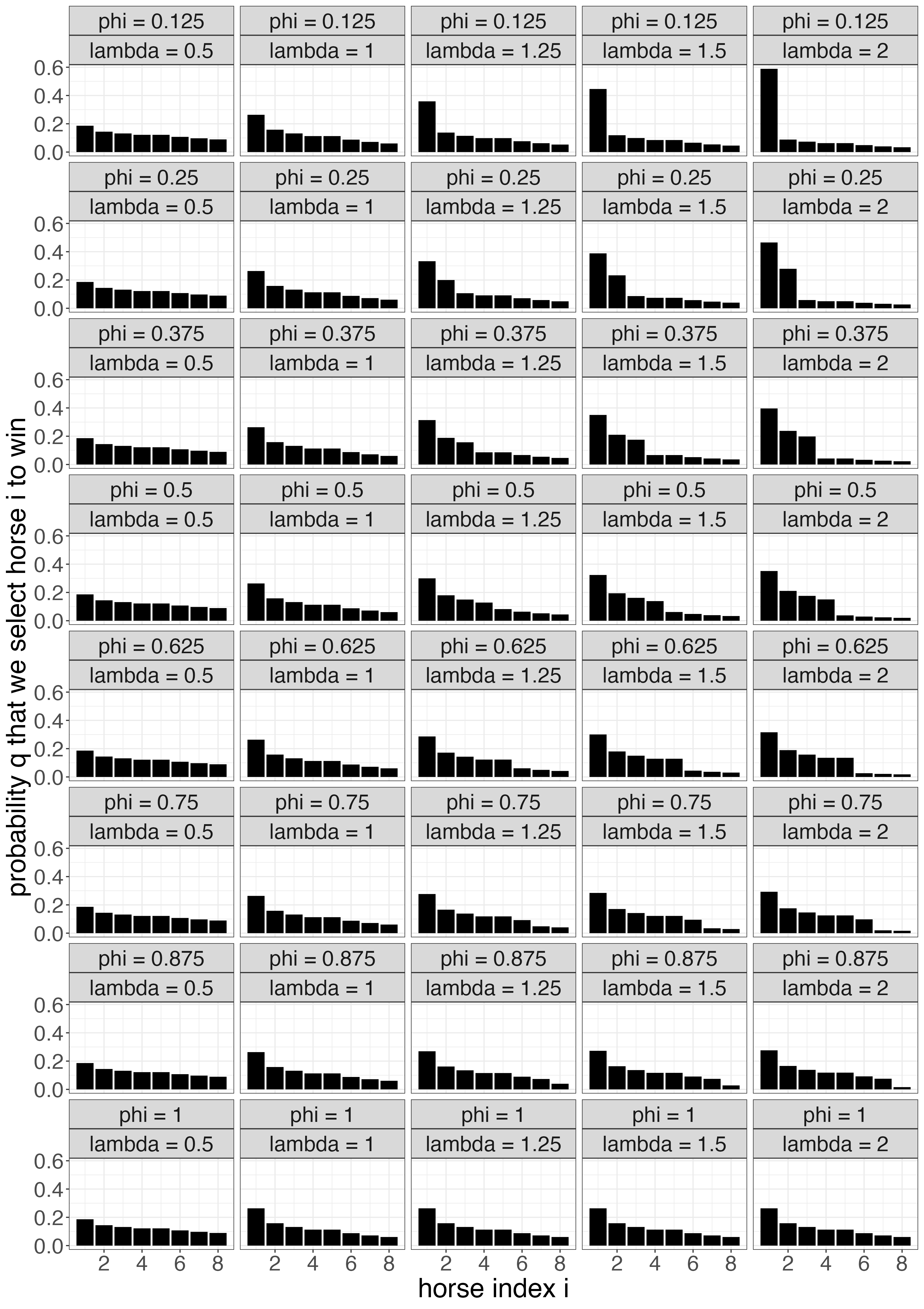}
    \caption{
        The probability $\bQ_{ij} = \bQ_{ij}(\lambda, \phi)$ ($y$-axis) that we select horse $i$ ($x$-axis) to win race $j = 6$ for various values of $\lambda$ (column) and $\phi$ (row).
        For fixed $\phi$, entropy increases as $\lambda$ decreases.
        For fixed $\lambda$, the probabilities of successively fewer horses are upweighted as $\phi$ decreases.
    }
    \label{fig:tilted_P_picksix}
\end{figure}

We are unable to analytically optimize the expected profit to find an optimal strategy $\bQ^\ast$ given the other variables, and we are unable to search over the entire high dimensional $\bQ$-space for an optimal strategy.
Instead, we apply the entropy-based strategies described in the previous sections.
The idea is to search over a subspace of $\bQ$ that explores strategies of varying entropies, finding the optimal entropy given the other variables.
To generate $n$ pick six tickets at varying levels of entropy, we let $\bQ = \bQ(\lambda, \phi)$ vary according to parameters $\lambda$ and $\phi$ that control the entropy.
Assuming without loss of generality that in each race $j$ the true win probabilities are sorted in decreasing order, $\bP_{1j} \geq \bP_{2j} \geq ... \geq \bP_{m_jj}$, we define $\bQ(\lambda, \phi)$ for $\lambda > 0$ and $\phi \in [0,1]$ by 
\begin{align}
    \widetilde\bQ_{ij}(\lambda, \phi) &= 
    \begin{cases}
        \big(\bP_{ij}/\bP_{round(\phi \cdot m_j), j}\big)^\lambda & \text{ if } \lambda < 1, \\
        \bP_{ij} \big( \lambda\cdot\one\{i \leq round(\phi \cdot m_j)\} + \frac{1}{\lambda} \cdot \one\{i \leq round(\phi \cdot m_j)\} \big) & \text{ if } \lambda \geq 1,
    \end{cases} \\
    \bQ_{ij}(\lambda, \phi) &= \widetilde\bQ_{ij}(\lambda, \phi) / \sum_{i=1}^{m_j} \widetilde\bQ_{ij}(\lambda, \phi),
\label{eqn:Q_lambda_phi_pick_six}
\end{align}
recalling that there are $m_j$ horses in race $j$.
We visualize these probabilities for race $j=6$ in Figure~\ref{fig:tilted_P_picksix}.
For fixed $\phi$, smaller values of $\lambda$ push the distribution $\bQ_{\ast j}$ closer towards the uniform distribution, increasing its entropy.
Conversely, increasing $\lambda$ lowers its entropy.
In lowering its entropy, we shift the probability from some horses onto other horses in a way that makes the distribution less uniform.
The parameter $\phi$ controls the number of horses to which we transfer probability as $\lambda$ increases.
For instance, there are $m_j = 8$ horses in race $j = 6$, so when $\phi = 3/8$ we transfer successively more probability to the top $3 = round(\phi \cdot m_j)$ horses as $\lambda$ increases.


Further, we assume we play against opponents who generate brackets according to the strategy $\bR_{ij}(\lambda_{opp}) = \bP(\lambda = \lambda_{opp}, \phi = 1/8)$.
In other words, low entropy opponents bet mostly on the one or two favorite horses (depending on $m_j$), high entropy opponents are close to the uniform distribution, and moderate entropy opponents lie somewhere in the middle.
The exact specification of the opponents' distribution isn't important, as we use it to illustrate a general point.
In future work, one can try to model the distribution of the publics' ticket submissions to get more precise results.

In Figure~\ref{fig:eProfit_pickSix} we visualize expected profit for a pick six horse racing betting pool in which we submit $n$ tickets according to strategy $\bQ(\lambda,\phi)$ against a field of $k = 25,000$ opponents who use strategy $\bR(\lambda_{opp})$, assuming a track take of $\alpha = 0.05$ and carryover $C=500,000$, as a function of $\lambda_{opp}$ and $n$.
Given these variables, we use the strategy $(\lambda, \phi)$ that maximizes expected profit over a grid of values.
We see that the entropy of the optimal strategy increases as $n$ increases (i.e., $\lambda$ decreases and $\phi$ increases as $n$ increases).
Further, we see that submitting many brackets at a smart entropy level is hugely profitable.
This holds true particularly when the carryover is large enough, which occurs fairly regularly.

\begin{figure}[htb!]
    \centering
    \includegraphics[width=15cm]{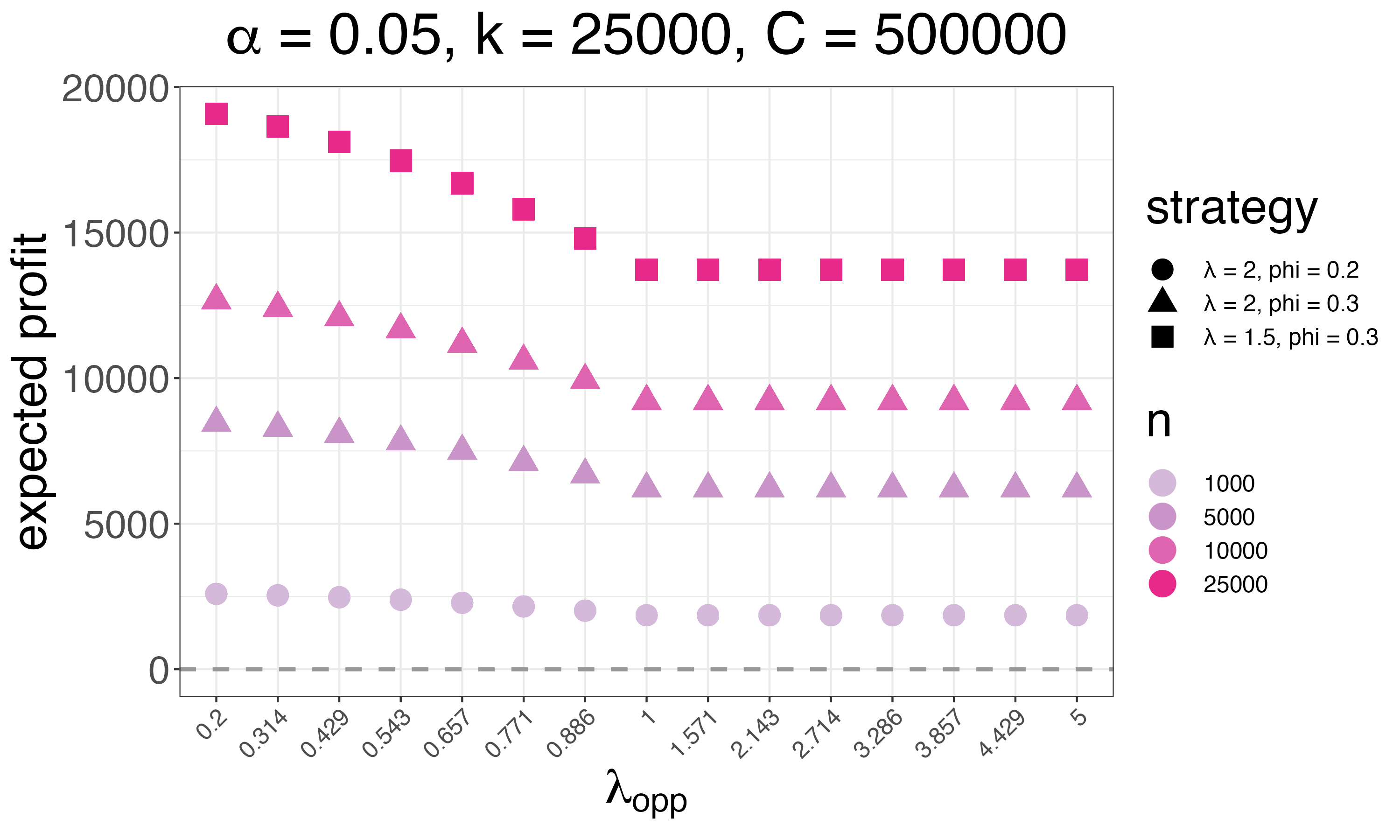}
    \caption{
        A lower bound of our expected profit ($y$-axis) for a pick six horse racing betting pool in which we submit $n$ tickets according to strategy $\bQ(\lambda,\phi)$ against a field of $k = 25,000$ opponents who use strategy $\bR(\lambda_{opp})$, assuming a track take of $\alpha = 0.05$ and carryover $C=500,000$, as a function of $\lambda_{opp}$ ($x$-axis) and $n$ (color).
        Given these variables, we use the strategy $(\lambda, \phi)$ that maximizes expected profit over a grid of values.
    }
    \label{fig:eProfit_pickSix}
\end{figure}

\subsection{March Madness bracket challenge}\label{sec:ex_MM}

March Madness bracket challenges are prime examples of multi-bracket pools.
In a bracket challenge, contestants submit an entire \textit{bracket}, or a complete specification of the game winners of each of the games in the NCAA Division I Men's Basketball ``March Madness'' tournament.
The winning bracket is closest to the observed NCAA tournament according to some metric.
Popular March Madness bracket challenges from ESPN, BetMGM, and DraftKings, for instance, offer large cash prizes -- BetMGM offered $\$10$ million to a perfect bracket or $\$100,000$ dollars to the closest bracket, DraftKings sent $\$60,000$ in cash prizes spread across the best $5,096$ brackets last year, and ESPN offered $\$100,000$ to the winner of a lottery among the entrants who scored the most points in each round of the tournament.\footnote{
    \url{https://www.thelines.com/best-march-madness-bracket-contests/}
}
To illustrate the difficulty of perfectly guessing the observed NCAA tournament, Warren Buffett famously offered $\$1$ billion to anyone who filled out a flawless bracket.\footnote{
    \url{https://bleacherreport.com/articles/1931210-warren-buffet-will-pay-1-billion-to-} \\ \url{fan-with-perfect-march-madness-bracket}
}
In this section we apply our entropy-based strategies to March Madness bracket challenges, demonstrating the impressive efficacy of this strategy.

To begin, denote the set of all brackets by $\X$, which consists of $N = 2^{63} = 2^{2^{6}-1}$ brackets since there are $63$ games through $6$ rounds in the NCAA tournament (excluding the four game play-in tournament). We define an atomic probability measure $\P$ on $\X$, where $\P(x)$ is the probability that bracket $x \in \X$ is the ``true'' observed NCAA tournament, as follows. 
Given that match $m \in \{1,...,63\}$ involves teams $i$ and $j$, we model the outcome of this match by $i \cdot b_m + j \cdot (1-b_m)$ where $b_m \overset{ind}{\sim} \text{Bernoulli}(\mathbf{P}_{ij})$.
In other words, with probability $\mathbf{P}_{ij}$, team $i$ wins the match, else team $j$ wins the match. 
Prior to the first round (games 1 through 32), the first 32 matchups are set. Given these matchups, the 32 winning teams in round one are determined by Bernoulli coin flips according to $\mathbf{P}$. These 32 winning teams from round one then uniquely determine the 16 matchups for the second round of the tournament. Given these matchups, the 16 winning teams in round two are also determined by Bernoulli coin flips according to $\mathbf{P}$. These winners then uniquely determine the matchups for round three. This process continues until the end of round six, when one winning team remains.

In this work, we assume we know the ``true'' win probabilities $\mathbf{P}$.
As our operating example in this section, we set $\bP$ to be the win probabilities implied by FiveThirtyEight's Elo ratings from the 2021 March Madness tournament.\footnote{
    \url{https://projects.fivethirtyeight.com/2022-march-madness-predictions/}
}
We scrape FiveThirtyEight's pre-round-one 2021 Elo ratings $\{\beta_i\}_{i=1}^{64}$ and index the teams by $i \in \{1,...,64\}$ in decreasing order of Elo rating (e.g., the best team Gonzaga is 1 and the worst team Texas Southern is 64).
Then we define $\mathbf{P}$ by $\mathbf{P}_{ij} = 1/(1+10^{-(\beta_i - \beta_j)*30.464/400})$.
In Figure~\ref{fig:ELO538A} we visualize $\{\beta_i\}_{i=1}^{64}$.
The Elo ratings range from 71.1 (Texas Southern) to 96.5 (Gonzaga), who is rated particularly highly.
In Figure~\ref{fig:ELO538B} we visualize $\mathbf{P}$ via the functions $j \mapsto \mathbf{P}_{ij}$ for each team $i$. 
For instance, Gonzaga's win probability function is the uppermost orange line, which is considerably higher than the other teams' lines. 

\begin{figure}[htb!]
\centering
\subfloat[Histogram of Elo Ratings]{\includegraphics[width = .4\textwidth]{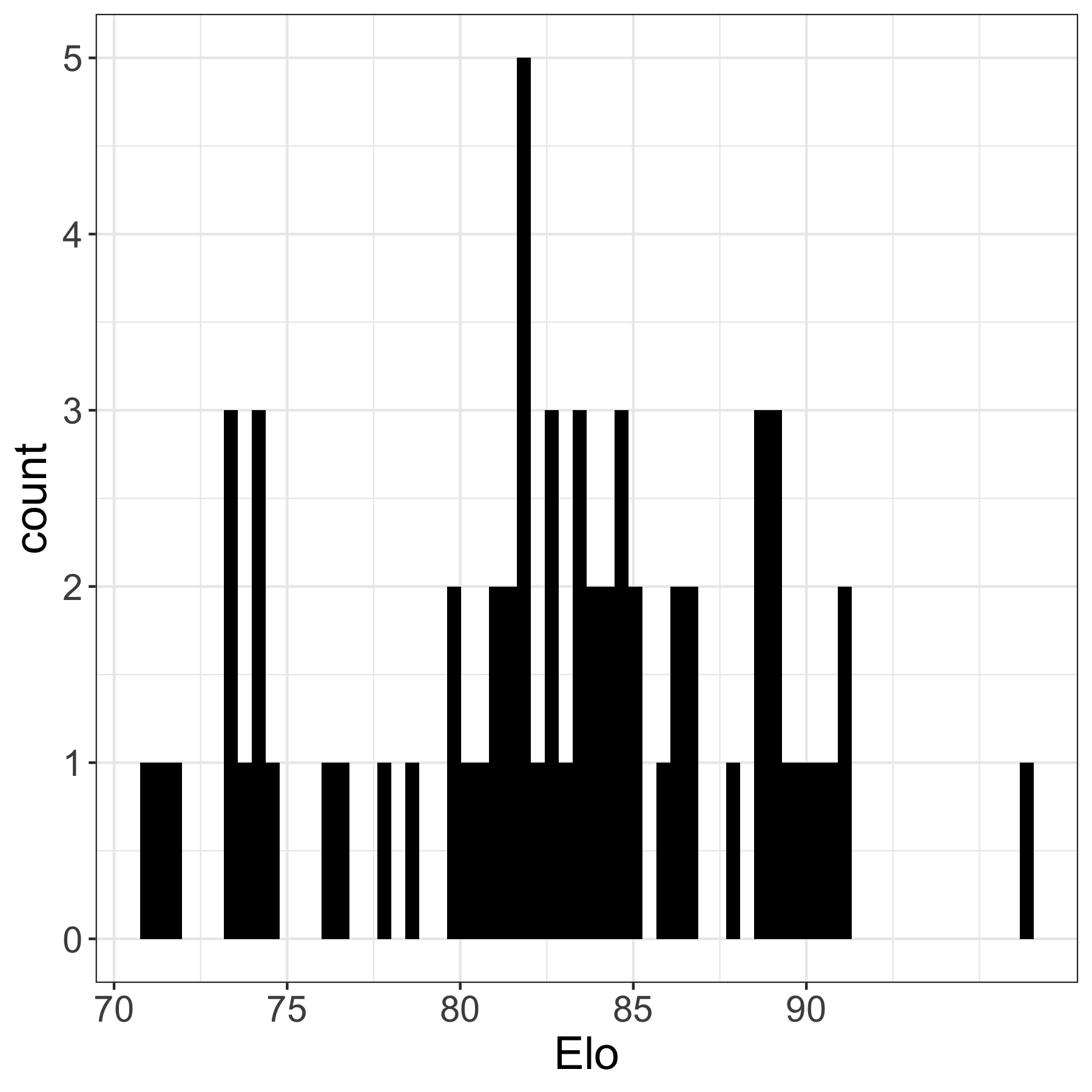}\label{fig:ELO538A}} \qquad
\subfloat[$j \mapsto \mathbf{P}_{ij}$ for each $i$]{\includegraphics[width = .4\textwidth]{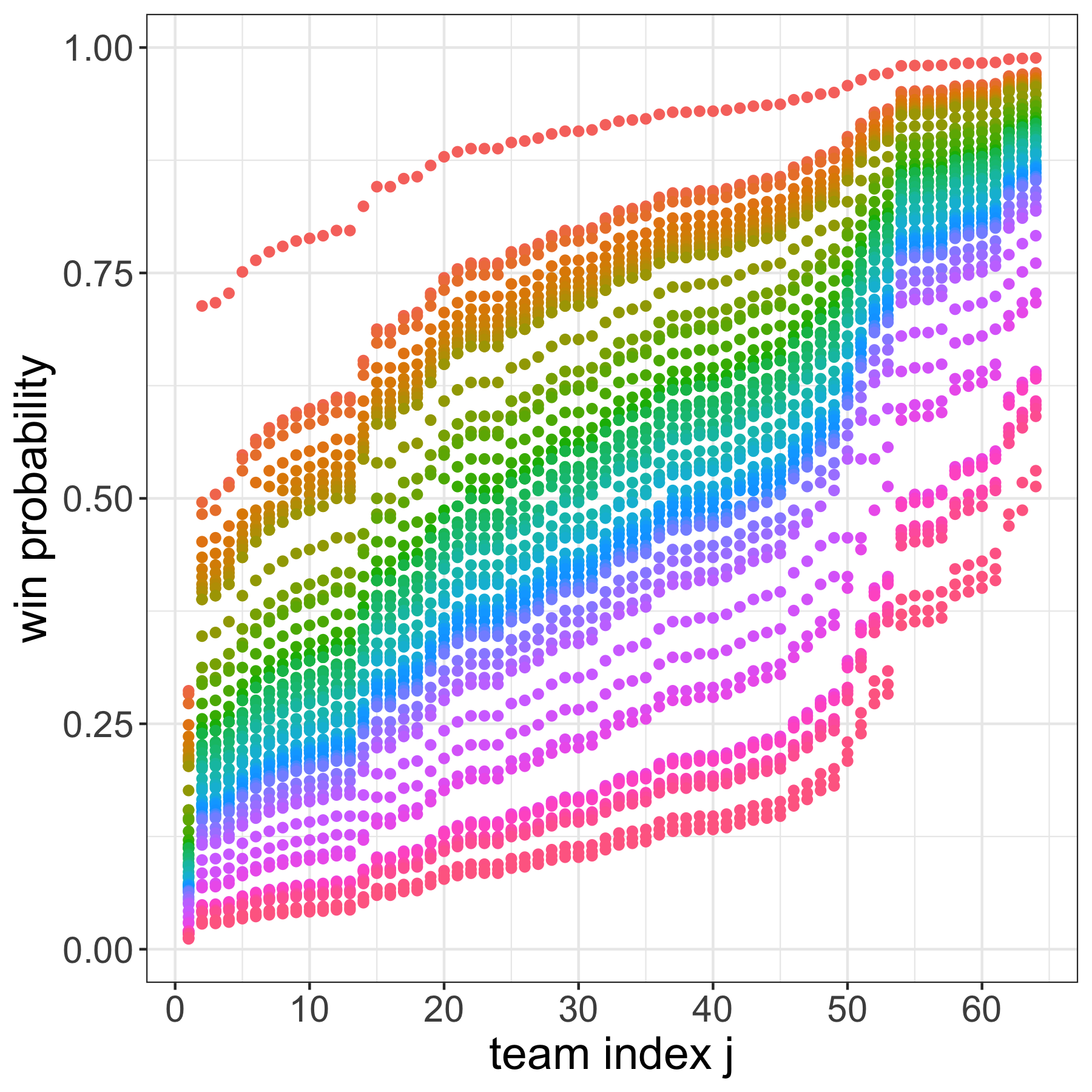}\label{fig:ELO538B}} 
\caption{
    Figure (a): histogram of FiveThirtyEight's pre-round-one Elo ratings for the 2021 March Madness tournament.
    Figure (b): the function $j \mapsto \mathbf{P}_{ij}$ for each team $i$ (color) implied by these Elo ratings.
}
\label{fig:ELO538}
\end{figure}

Suppose a field of opponents submits $k$ brackets $(y^{(1)},...,y^{(k)}) \subset \X$ to the bracket challenge according to some strategy $\bR$.
In particular, we assume the public submits $k$ independent brackets according to $\bR$, where $\bR_{ij}$ is the probability an opponent selects team $i$ to beat team $j$ in the event that they play. 
We submit $n$ brackets $(x^{(1)},...,x^{(n)}) \subset \X$ to the bracket challenge according to strategy $\bQ$. 
Specifically, we submit $n$ independent brackets according to $\bQ$, where $\bQ_{ij}$ is the probability we select team $i$ to beat team $j$ in the event that they play. 
The goal is to get as ``close'' to the ``true'' reference bracket $\tau \in \X$, or the observed NCAA tournament, as possible according to a bracket scoring function.
The most common such scoring function in these bracket challenges is what we call \textit{ESPN score}, which credits $10 \cdot 2^{\rd-1}$ points to correctly predicting the winner of a match in round $\rd \in \{1,...,6\}$.
Since there are $2^{6 - \rd}$ matches in each round $\rd$, ESPN score ensures that the maximum accruable points in each round is the same ($320$). 
Formally, our task is to submit $n$ brackets so as to maximize the probability we don't lose the bracket challenge,
\begin{align}
    \P\bigg[ \max_{j=1,...,n} f(x^{(j)},\tau) \geq \max_{\ell=1,...,k} f(y^{(\ell)},\tau) \bigg].
    \label{eqn:MMwp}
\end{align}
Alternatively, in the absence of information about our opponents, our task is to submit $n$ brackets so as to maximize expected maximum score,
\begin{align}
    \E\bigg[\max_{j=1,...,n} f(x^{(j)},\tau)\bigg].
    \label{eqn:MMeMaxScore}
\end{align}

Under this model, it is intractable to explicitly \textit{evaluate} these formulas for expected maximum score or win probability for general $\bP$, $\bQ$, and $\bR$, even when we independently draw brackets from these distributions.
This is because the scores $f(x^{(1)},\tau)$ and $f(x^{(2)},\tau)$ of two submitted brackets $x^{(1)}$ and $x^{(2)}$ relative to $\tau$ are both dependent on $\tau$, and integrating over $\tau$ yields a sum over all $2^{m} = 2^{63}$ possible true brackets for $\tau$, which is intractable.
Hence we use Monte Carlo simulation to approximate expected maximum score and win probability.
We approximate expected maximum score via
\begin{align}
    \E\bigg[ \max_{j=1,...,n} f(x^{(j)},\tau) \bigg] 
    \approx \frac{1}{B_1} \sum_{b_1=1}^{B_1} \frac{1}{B_2} \sum_{b_2=1}^{B_2} \max_{j=1,...,n} f(x^{(j,b_2)},\tau^{(b_1)}),
    \label{eqn:fmm_monte_carlo_escore}
\end{align}
where the $\tau^{(b_1)}$ are independent samples from $\bP$ and the $x^{(j,b_2)}$ are independent samples from $\bQ$.
We use a double Monte Carlo sum, with $B_1 = 250$ draws of $\tau$ and $B_2 = 100$ draws of $(x^{(1)},...,x^{(n)})$, because it provides a smoother and stabler approximation than a single Monte Carlo sum.
Similarly, we approximate win probability via
\begin{align}
    & \P \bigg[ \max_{j=1,...,n} f(x^{(j)},\tau) \geq \max_{\ell=1,...,k} f(y^{(\ell)},\tau) \bigg] \\ 
    \approx&
    \frac{1}{B_1} \sum_{b_1=1}^{B_1} \frac{1}{B_2} \sum_{b_2=1}^{B_2} 
   \one\bigg\{ \max_{j=1,...,n} f(x^{(j,b_2)},\tau^{(b_1)}) \geq \max_{\ell=1,...,k} f(y^{(\ell,b_2)},\tau^{(b_1)}) \bigg\},
    \label{eqn:fmm_monte_carlo_wp}
\end{align}
where the $\tau^{(b_1)}$ are independent samples from $\bP$, the $x^{(j,b_2)}$ are independent samples from $\bQ$, and the $y^{(\ell,b_2)}$ are independent samples from $\bR$.
We again use a double Monte Carlo sum, with $B_1 = 250$ draws of $\tau$ and $B_2 = 100$ draws of $(x^{(1)},...,x^{(n)})$ and $(y^{(1)},...,y^{(k)})$, because it provides a smooth and stable approximation.

We are unable to analytically optimize these objective functions to find an optimal strategy $\bQ^\ast$ given the other variables, and we are unable to search over the entire high dimensional $\bQ$-space for an optimal strategy.
These problems are even more difficult than simply evaluating these objective functions, which itself is intractable.
Thus, we apply the entropy-based strategies from the previous sections, which involve generating successively higher entropy brackets as $n$ increases.
The idea is to search over a subspace of $\bQ$ that explores strategies of varying entropies, finding the optimal entropy given the other variables.
To generate $n$ brackets at varying levels of entropy, we let $\bQ = \bQ(\lambda)$ vary according to the parameter $\lambda$ that controls the entropy.
In a game in which team $i$ is favored against team $j$ (so $i < j$, since we indexed the teams in decreasing order of team strength, and $\bP_{ij} \in [0.5,1]$), the lowest entropy (chalkiest) strategy features $\bQ_{ij} = 1$, the ``true'' entropy strategy features $\bQ_{ij} = \bP_{ij}$, and the highest entropy strategy features $\bQ_{ij} = 1/2$. 
We construct a family for $\bQ$ that interpolates between these three poles,
\begin{equation}
    \bQ_{ij}(\lambda) := 
    \begin{cases}
        (1-2\lambda) \frac{1}{2} + (2\lambda) \bP_{ij}  & \text{ if } \lambda \in [0,\frac{1}{2}] \text{ and } i<j, \\
        (1-2(\lambda-\frac{1}{2})) \bP_{ij} + 2(\lambda-\frac{1}{2}) 1 & \text{ if }  \lambda \in [\frac{1}{2},1] \text{ and } i<j,
    \end{cases}
    \label{eqn:lambda_chalky_2}
\end{equation}
where $\lambda \in [0,1]$.
The entropy of $\bQ(\lambda)$ increases as $\lambda$ decreases.

Further, we assume we play against \textit{colloquially chalky} opponents, who usually bet on the higher seeded team.
Each team in the March Madness tournament is assigned a numerical ranking from 1 to 16, their \textit{seed}, prior to the start of the tournament by the NCAA Division I Men's Basketball committee.
The seeds determine the matchups in round one and are a measure of team strength (i.e., lower seeded teams are considered better by the committee).
We suppose colloquially chalky opponents generate brackets according to a distribution $\Pchalky$ based on the seeds $s_i$ and $s_j$ of teams $i$ and $j$,
\begin{equation}
\Pchalky_{ij} = 
\begin{cases}
    0.9 \text{ if } s_i - s_j < -1, \\
    0.5 \text{ if } |s_i - s_j| \leq 1, \\
    0.1 \text{ if } s_i - s_j > 1,
\end{cases}
\label{eqn:colloquially_chalky}
\end{equation}
so they usually bet on the higher seeded team.
The exact specification of the colloquially-chalky distribution isn't important, as we use $\Pchalky$ to illustrate a general point.
In future work, one can try to model the distribution of the publics' bracket submissions to get more precise results.


In Figure~\ref{fig:plotMMeMaxScore} we visualize the expected max score of $n$ brackets generated according to $\bQ(\lambda)$ as a function of $n$ and $\lambda$.
In Figure~\ref{fig:plotMMwp} we visualize the probability that the max score of $n$ brackets generated according to $\bQ(\lambda)$ exceeds that of $k = 10,000$ colloqually chalky brackets generated according to $\Pchalky$ as a function of $n$ and $\lambda$.
In both, we again see that we should increase entropy (decrease $\lambda$) as $n$ increases.
In particular, the small circle (indicating the best strategy given $n$ and $k$) moves leftward as $n$ increases.
Further, we see that tuning the entropy of our submitted bracket set given the other variables yields an excellent win probability, even when $n$ is much smaller than $k$. 

\begin{figure}[htb!]
\centering
\subfloat[]{\includegraphics[width = .45\textwidth]{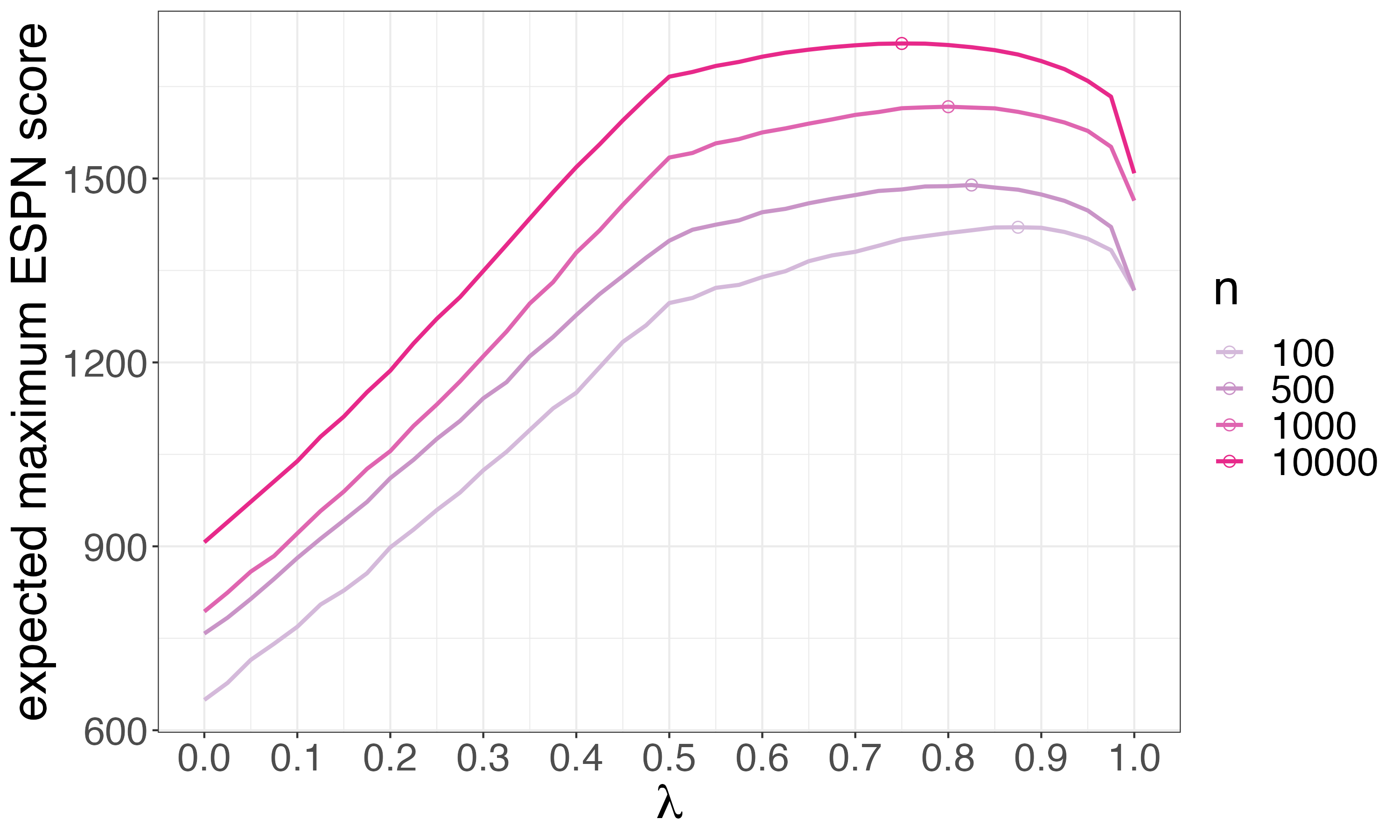}\label{fig:plotMMeMaxScore}} \qquad
\subfloat[]{\includegraphics[width = .45\textwidth]{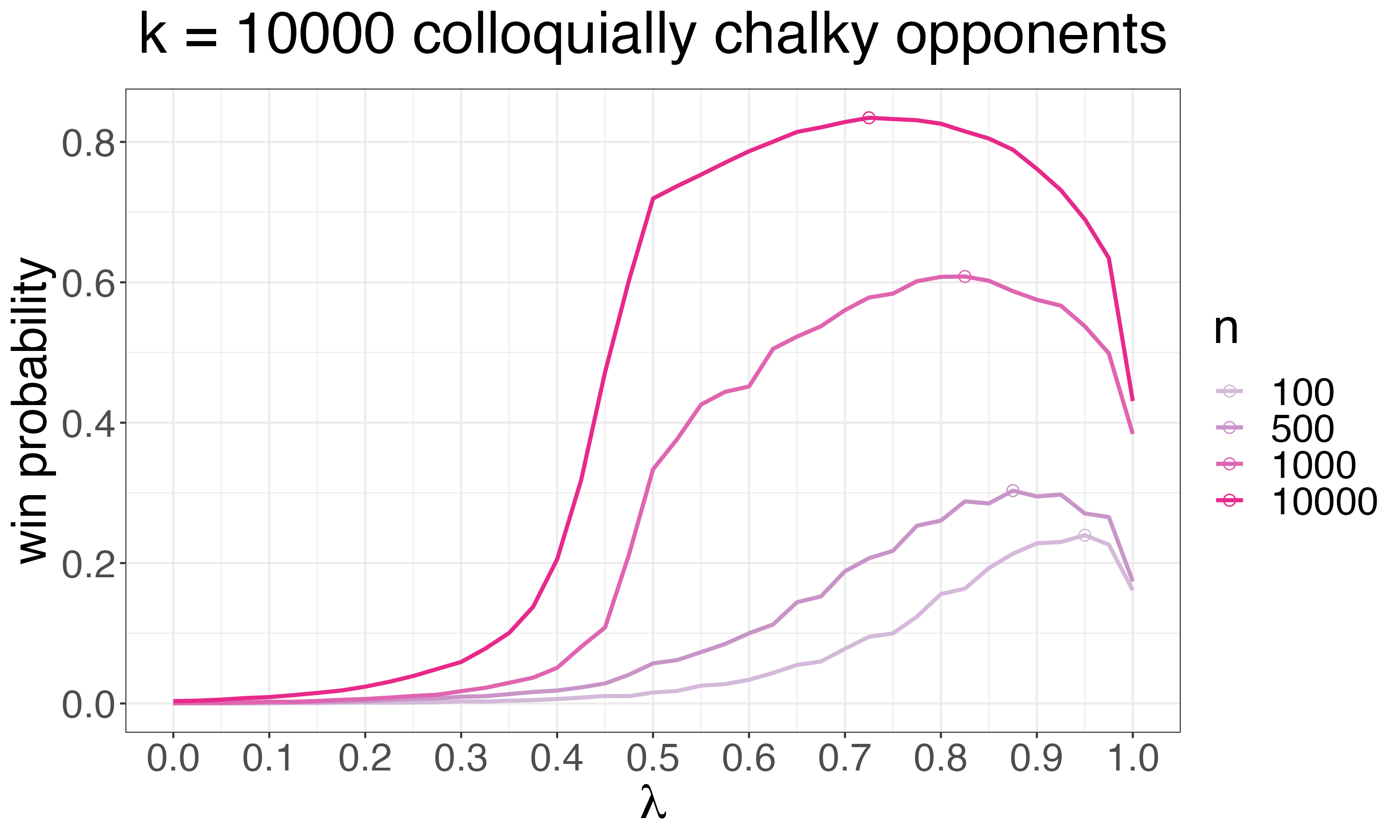}\label{fig:plotMMwp}} 
\caption{
    Figure (a): the expected max ESPN score ($y$-axis) of $n$ brackets generated according to $\bQ(\lambda)$ as a function of $n$ (color) and $\lambda$ ($x$-axis).
    Figure (b): the probability ($y$-axis) that the max ESPN score of $n$ brackets generated according to $\bQ(\lambda)$ exceeds that of $k = 10,000$ colloqually chalky brackets generated according to $\Pchalky$ as a function of $n$ (color) and $\lambda$ ($x$-axis).
    The small circle indicates the best strategy given $n$ and $k$.
    We want to increase entropy (decrease $\lambda$) as $n$ increases.
}
\label{fig:plotMMresults}
\end{figure}

\section{Discussion}\label{sec:discussion}

In this work, we pose and explore the multi-brackets problem: how should we submit $n$ predictions of a randomly drawn reference bracket (tuple)? 
The most general version of this question, which finds the optimal set of $n$ brackets across all such possible sets, is extremely difficult.
To make the problem tractable, possible, and/or able to be visualized, depending on the particular specification of the multi-bracket pool, we make simpifying assumptions.
First, we assume we (and optionally a field of opponents) submit i.i.d. brackets generated according to a bracket distribution.
The task becomes to find the optimal generating bracket distribution.
For some multi-bracket pools this is tractable and for others it is not.
For those pools, we make another simplifying assumption, searching over a smartly chosen low dimensional subspace of generating bracket distributions covering distributions of various levels of entropy.
We find this approach is sufficient to generate well-performing sets of submitted brackets.
We also learn the following high-level lessons from this strategy: we should increase the entropy of our bracket predictions as $n$ increases and as our opponents increase entropy.

We leave much room for future work on the multi-brackets problem.
First, it is still an open and difficult problem to find the optimal set of $n$ bracket predictions across \textit{all} such possible subsets, where optimal could mean maximizing expected maximum score, win probability, or expected profit.
Second, in this work we assume the ``true'' probabilities $\bP$ and our opponents' generating bracket strategy $\bR$ exists and are known.
A fruitful extension of this work would revisit the problems posed in this work under the lens that, in practice, these distributions are either estimated from data or are unknown (e.g., as in \citet{metel2017}).
Finally, we suggest exploring more problem-specific approaches to particular multi-bracket pools.
For instance, in March Madness bracket challenges we suggest exploring strategies of varying levels entropy within each round.
Perhaps the publics' entropy is too low in early rounds and too high in later rounds, suggesting we should counter by increasing our entropy in earlier rounds and decreasing our entropy in later rounds.



\bibliography{ref.bib}

\newpage
\begin{center}
{\LARGE \textbf{Appendix}}
\end{center}
\appendix




\section{Our code}\label{app:code}

Our code for this project is publicly available at \url{https://github.com/snoopryan123/entropy_ncaa}.

\section{Previous work details}\label{app:prevWork_details}

\subsection{Estimating outcome probabilities in horse racing and March Madness}\label{app:probEstimationDetails}

There has been a plethora of research on estimating horse race outcome probabilities.
A line of research beginning with \citet{harville1973} estimates the probabilities of the various possible orders of finish of a horse race assuming knowledge of just the win probabilities of each individual horse.
\citet{Henery1981}, \citet{stern1990}, \citet{Lo1992}, \citet{Lo1994}, and \citet{LO2008} extend this work, developing better and more tractable models.
There has also been extensive research on the favorite-longshot bias, the phenomenon that the public typically underbets favored horses and overbets longshot horses, skewing the win probabilities implied by the odds that each horse wins the race.
For instance, \citet{Asch1984}, \citet{Ali1998}, \citet{ROSENBLOOM2003}, \citet{Brown2003}, and \citet{Hausch2008} illustrate and explain the favorite-longshot bias, and some of these works attempt to adjust for this bias in estimating horse racing outcome probabilities.
Yet another line of research focuses on comprehensively estimating these horse finishing probabilities.
\citet{bolton1986} model outcome probabilities using a multinomial logistic regression model, forming the basis for most modern prediction methods.
\citet{chapman2008}, \citet{benter}, \citet{edelman2007}, \citet{Lessmann2007}, and \citet{Lessmann2009} extend this work.\footnote{
    Benter in particular reported that his team has made significant profits during their five year gambling operation. 
}
\citet{Dayes2010} searches for covariates that are predictive of horse race outcome probabilities even after adjusting for the odds, \citet{Silverman2012} estimates a hierarchical Bayesian model of these probabilities, and \citet{Gulum2018} uses machine learning and graph-based features to estimate these probabilities.
For a more comprehensive review of this literature, see \citet{Hausch2008}.

Similarly, there has been a plethora of research on estimating win probabilities for March Madness matchups.
Publicly available NCAA basketball team ratings have been around for decades, for instance from \citet{KenMassey}, \citet{KenPomeroy}, \citet{Sagarin}, \citet{SonnyMoore}, and \citet{LRMC}.
Other early approaches from \citet{moreProbModels} and \citet{kvam06} use simple logistic regression models to rate teams.
Since then, modelers have aggregated existing team rating systems into ensemble models.
For instance, \citet{carlin2005improved} uses a prediction model that merges Vegas point spreads with other team strength models, \citet{LopezMatthews} merge point spreads with possession based team efficiency metrics, and \citet{538_ncaa_elo} combines some of these publicly available ratings systems with their own ELO ratings.
Today, people use machine learning or other more elaborate modeling techniques to build team ratings systems.
For instance, ESPN's BPI uses a Bayesian hierarchical model to predict each team's projected point differential against an average Division I team on a neutral court \citep{BPI}.
Further, \citet{Ji2015MarchMP} use a matrix completion approach, \citet{predGBM} uses gradient boosting, \citet{neuralnetpreds} uses a neural network, \citet{randomforestpreds} use a random forest, and \citet{7176206}, \citet{MLcomp}, and \citet{kimwoomagnusen} compare various machine learning models.
Finally, each year there is a popular Kaggle competition in which contestants submit win probabilities for each game and are evaluated on the log-loss of their probabilities \citep{march-machine-learning-mania-2023}.

\subsection{Previous approaches to submitting multiple brackets to a March Madness bracket challenge}\label{app:prevWorkMMmultipleBrackets}

For March Madness bracket challenges, there has been limited research on what we should do with team ratings and win probability estimates once we obtain them.
Most existing research has focused on filling out one optimal bracket after obtaining win probabilities.
For instance, \citet{10.2307/2661505} find the bracket which maximizes expected score and \citet{RePEc:inm:oropre:v:55:y:2007:i:6:p:1163-1177} find the bracket which maximizes expected return conditional on the behavior of other entrants' submitted brackets. 
There has also been some work on filling out multiple brackets in the sports analytics community.
For instance, Scott Powers and Eli Shayer created an \texttt{R} package \texttt{mRchmadness}\footnote{
    \url{https://github.com/elishayer/mRchmadness}
} which uses simulation methods to generate an optimal set of brackets.
Also, Tauhid Zaman at the 2019 Sports Analytics conference\footnote{
    \url{https://www.youtube.com/watch?v=mAgb8A2GDAQ}
} used integer programming to greedily generate a sequence of ``optimal'' brackets subject to ``diversity'' constraints, which force the next bracket in the sequence to be meaningfully different from prior brackets.
Nonetheless, we are not aware of any papers which focus on filling out multiple brackets so as to optimize maximum score or win probability.

\section{Guessing a randomly drawn bitstring details}\label{app:smm_details}

\subsection{Expected maximum score}\label{app:smm_eMaxScore_details}

The expected maximum score of $n$ submitted brackets is
\begin{align}
     & \E \bigg[ \max_{j=1,...,n} f(x^{(j)},\tau) \bigg] \\
    =& \sum_{a=0}^{m} \P \bigg( \max_{j=1,...,n} f(x^{(j)},\tau) > a \bigg) \quad \text{by tail sum} \\
    =& \sum_{a=0}^{m} \bigg\{ 1 - \P \bigg( \max_{j=1,...,n} f(x^{(j)},\tau) \leq a\bigg)\bigg\} \\
    =& \sum_{a=0}^{m} \bigg\{ 1 - \sum_{u=0}^{m} \P \bigg( \max_{j=1,...,n} f(x^{(j)},\tau) \leq a \bigg| u \bigg) \P(u) \bigg\},
\end{align}
where $u = (u_1,...,u_R)$ and $u_\rd$ is the number of zeros in $\tau$ in round $\rd$.
With this definition of $u$, $\{ f(x^{(j)},\tau) \}_{j=1}^{n}$ are conditionally i.i.d. given $u$ and 
\begin{equation}
\P(u) = \prod_{\rd=1}^R \P(u_\rd) = \prod_{\rd=1}^R \dbinom(u_\rd,m_\rd,1-p_\rd).
\end{equation}
Thus,
\begin{align}
    & \E \bigg[ \max_{j=1,...,n} f(x^{(j)},\tau) \bigg] \\
    =& \sum_{a=0}^{m} \bigg\{ 1 - \sum_{u=0}^{m} \P \bigg(f(x^{(j)},\tau) \leq a \text{ for all $j$} \bigg| u \bigg) \P(u) \bigg\} \\
    =& \sum_{a=0}^{m} \bigg\{ 1 - \sum_{u=0}^{m} \P \bigg(f(x^{(1)},\tau) \leq a \bigg| u \bigg)^n \P(u) \bigg\}.
\end{align}
The CDF of the score given $u$ is 
\begin{align}
     & \P \bigg(f(x^{(1)},\tau) \leq a \bigg| u \bigg) \\
    =& \P\bigg(\sum_{\rd=1}^{R} \sum_{i=1}^{m_{\rd}} w_{\rd,i} \cdot \one\{x^{(1)}_{\rd,i} = \tau_{\rd,i}\} \leq a \bigg| u  \bigg) \\
    =& \P\bigg( \sum_{\rd=1}^{R} w_\rd \cdot \big( \Binom(u_\rd, 1-q_\rd) + \Binom(m_\rd - u_\rd, q_\rd) \big) \leq a \bigg). 
\end{align}
This is the CDF of a generalized Poisson Binomial distribution, which we compute in $\Rstu$ using the \textsf{PoissonBinomial} package \citep{PoissonBinomial}.

\subsection{Win probability}\label{app:smm_wp_details}

The probability that the maximum score of our $n$ submitted brackets exceeds or ties that of $k$ opposing brackets is
{\allowdisplaybreaks
\begin{align}
    & \P\bigg[ \max_{j=1,...,n} f(x^{(j)},\tau) \geq \max_{\ell=1,...,k} f(y^{(k)},\tau) \bigg] \\
    =& 1 - \P \bigg[ \max_{j=1,...,n} f(x^{(j)},\tau) < \max_{\ell=1,...,k} f(y^{(k)},\tau)  \bigg] \\
    =& 1 - \P \bigg[  f(x^{(j)},\tau) < \max_{\ell=1,...,k} f(y^{(k)},\tau) \ \forall j \bigg] \\
    =& 1 - \sum_{u=0}^{m} \P \bigg[ f(x^{(j)},\tau) < \max_{\ell=1,...,k} f(y^{(k)},\tau) \ \forall j \bigg| u \bigg] \P(u) \\
    =& 1 - \sum_{u,a=0}^{m} \P \bigg[  f(x^{(j)},\tau) < a \ \forall j \bigg| u \bigg] \P\bigg( \max_{\ell=1,...,k} f(y^{(k)},\tau) = a \bigg| u\bigg) \P(u) \\
    =& 1 - \sum_{u,a=0}^{m} \P \bigg[ f(x^{(1)},\tau) < a \bigg| u \bigg]^n \bigg\{ \P\bigg( \max_{\ell=1,...,k} f(y^{(k)},\tau) \leq a \bigg| u\bigg) - \P\bigg( \max_{\ell=1,...,k} f(y^{(k)},\tau) \leq a -1 \bigg| u\bigg) \bigg\} \P(u) \\
    =& 1 - \sum_{u,a=0}^{m} \P \bigg[ f(x^{(1)},\tau) \leq a-1 \bigg| u \bigg]^n \bigg\{ \P\bigg( f(y^{(1)},\tau) \leq a \bigg| u\bigg)^k - \P\bigg( f(y^{(1)},\tau) \leq a-1 \bigg| u\bigg)^k \bigg\} \P(u).
\end{align}
}

Here, we condition on $u = (u_1,...,u_R)$ where $u_\rd$ is the number of zeros in $\tau$ in round $\rd$.
With this definition of $u$, both $\{ f(x^{(j)},\tau) \}_{j=1}^{n}$ and $\{ f(y^{(\ell)},\tau) \}_{\ell=1}^{k}$ are conditionally i.i.d. given $u$,
We compute the Generalized Poisson Binomial CDFs of the scores $f(x^{(1)},\tau)$ and $f(y^{(1)},\tau)$ given $u$ as described in Appendix~\ref{app:smm_eMaxScore_details}.

In Figure~\ref{fig:wp_smm_varyK} we visualize this win probability as a function of $q$ and $r$ for $p = 0.75$ and various values of $k$ and $n$.

\begin{figure}[p]
  \centering
  \subfloat[]{\includegraphics[width=.42\textwidth]{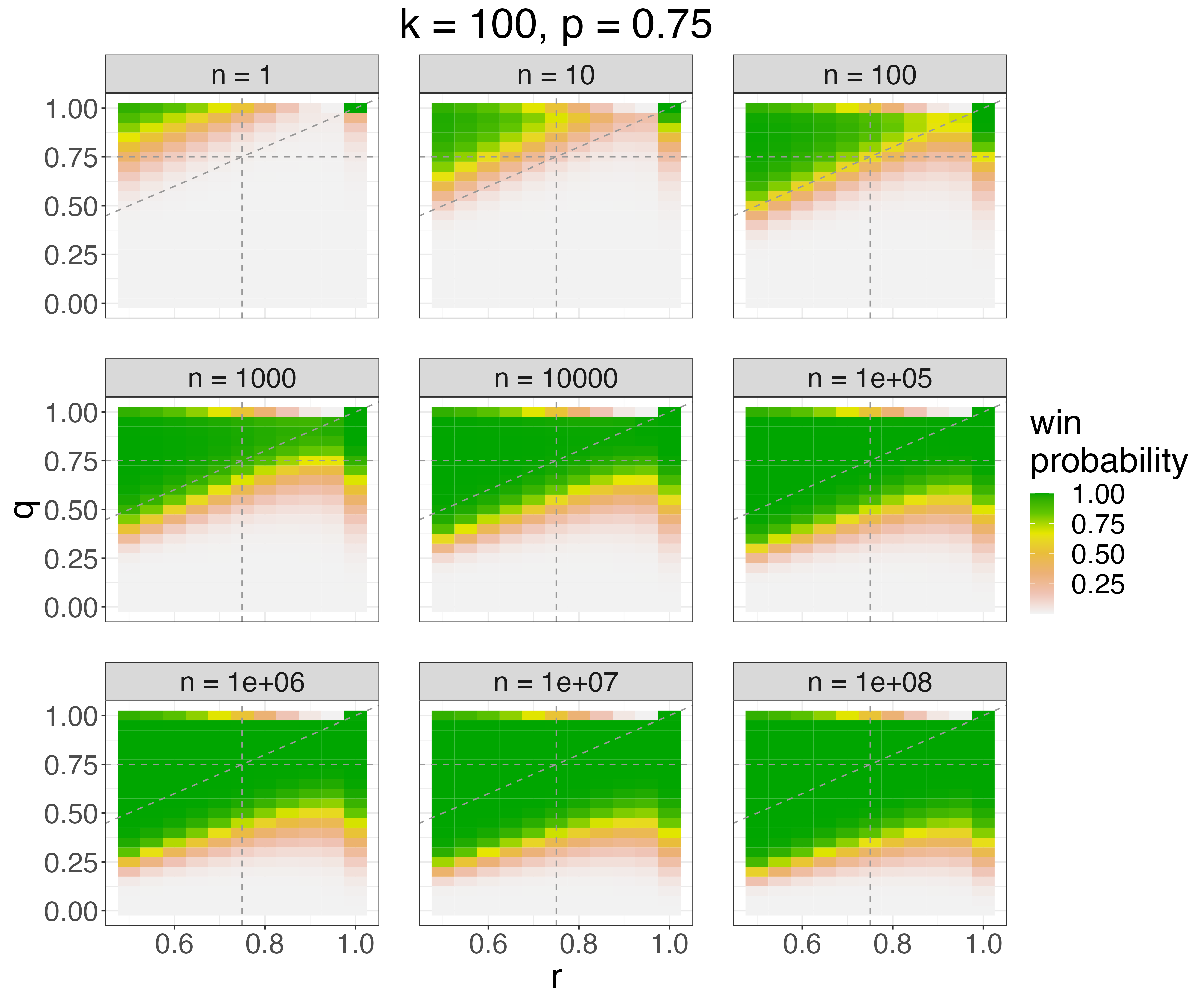}}\quad
  \subfloat[]{\includegraphics[width=.42\textwidth]{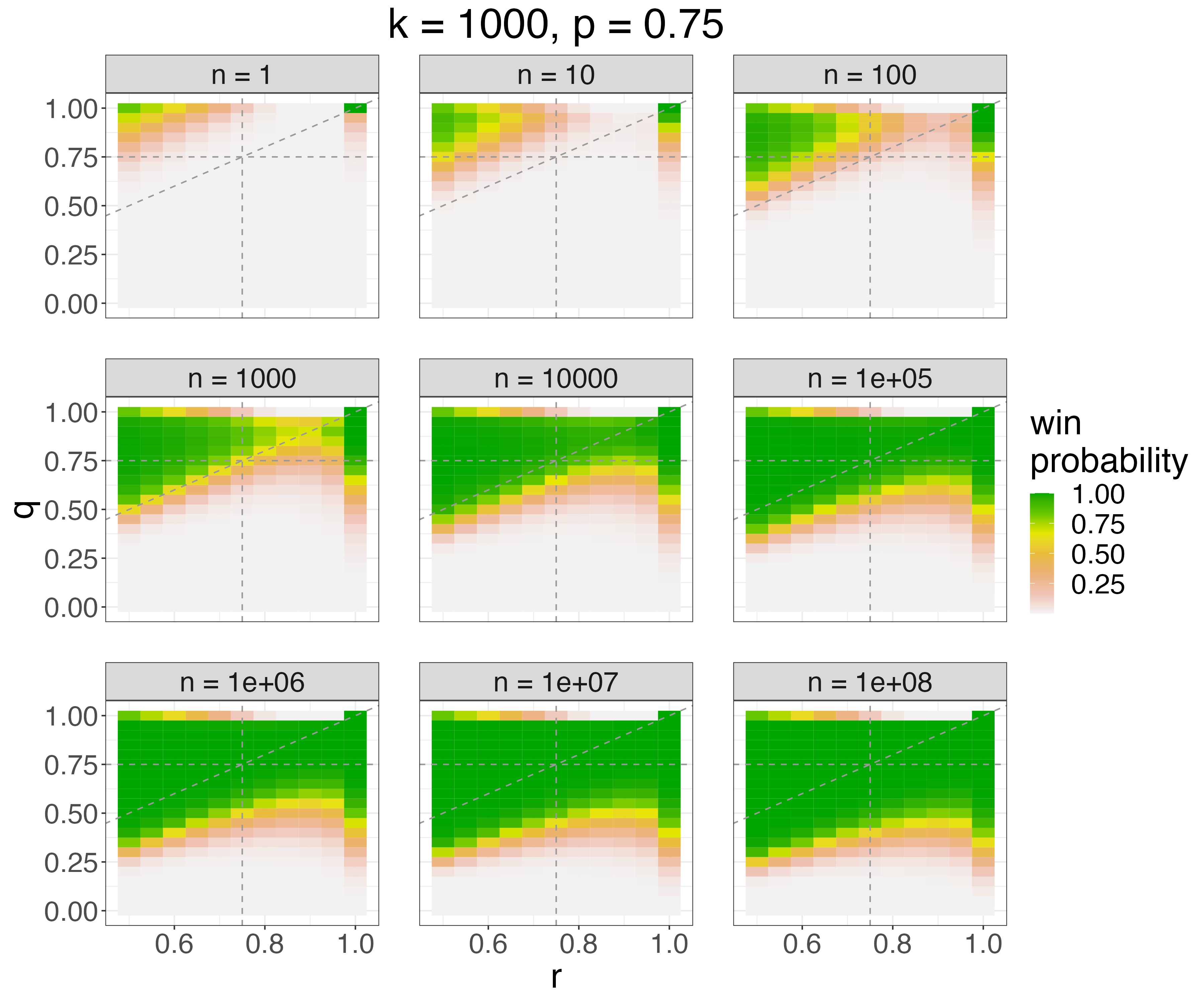}} \\
  \subfloat[]{\includegraphics[width=.42\textwidth]{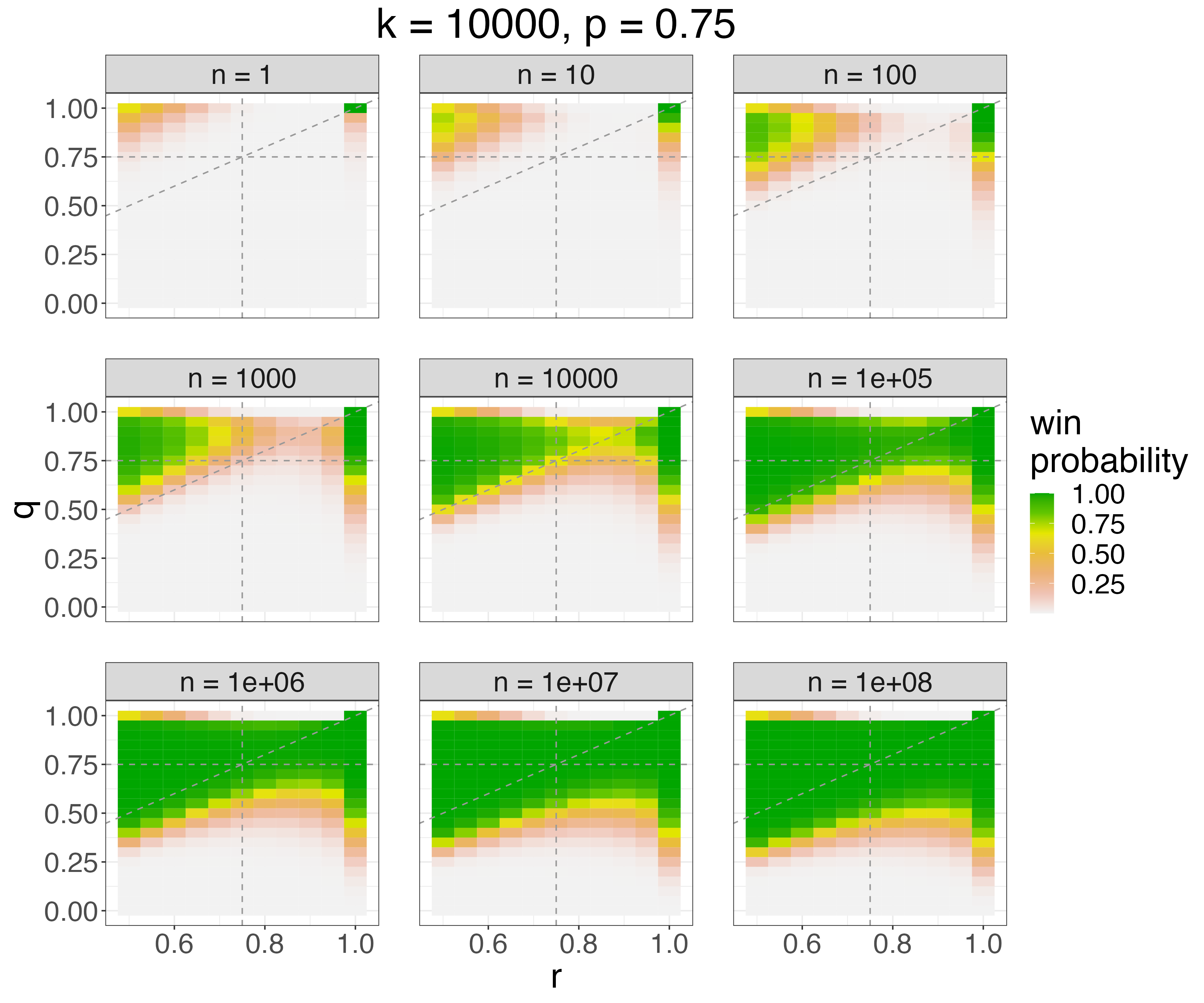}}\quad
  \subfloat[]{\includegraphics[width=.42\textwidth]{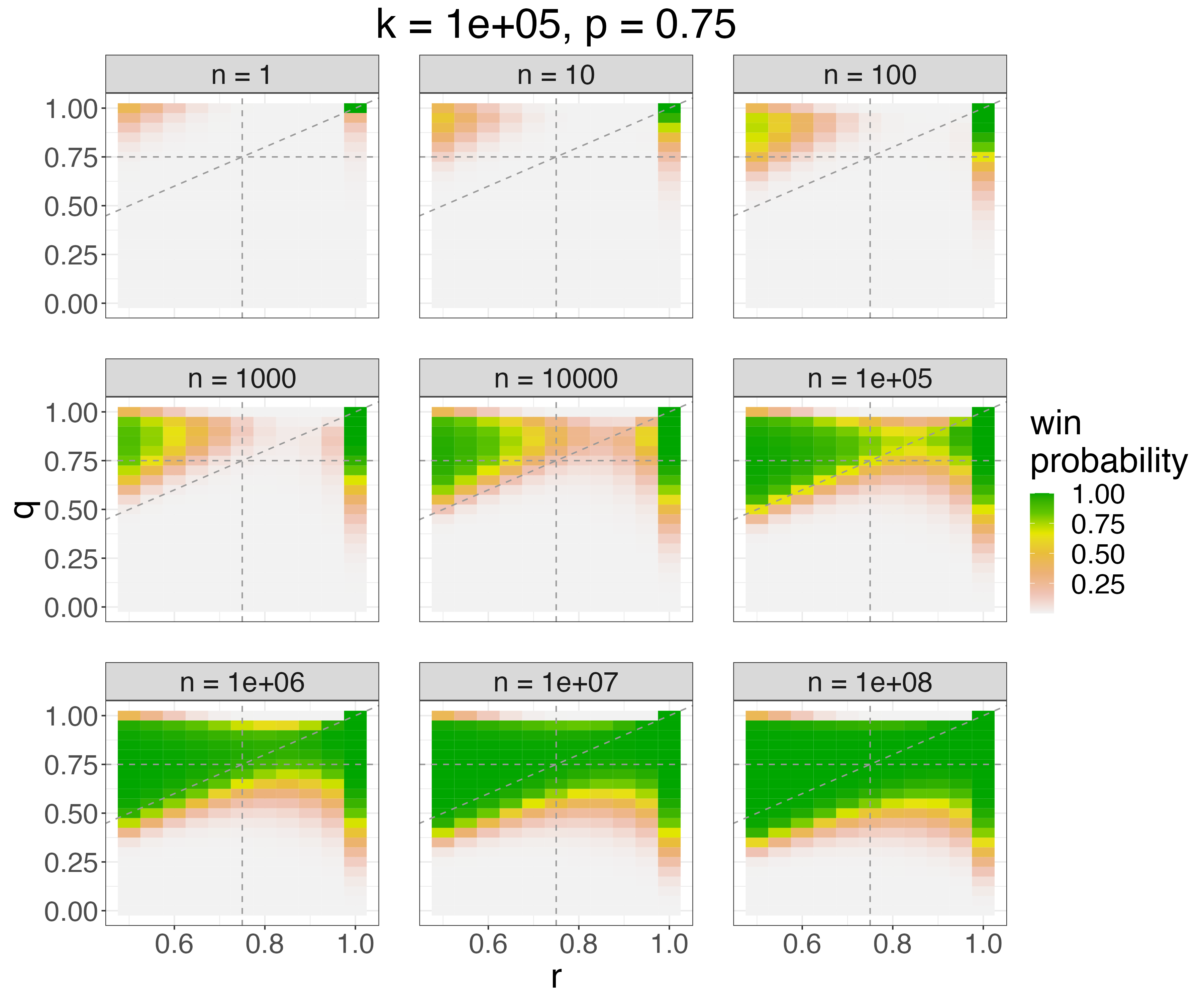}}\\
  \subfloat[]{\includegraphics[width=.42\textwidth]{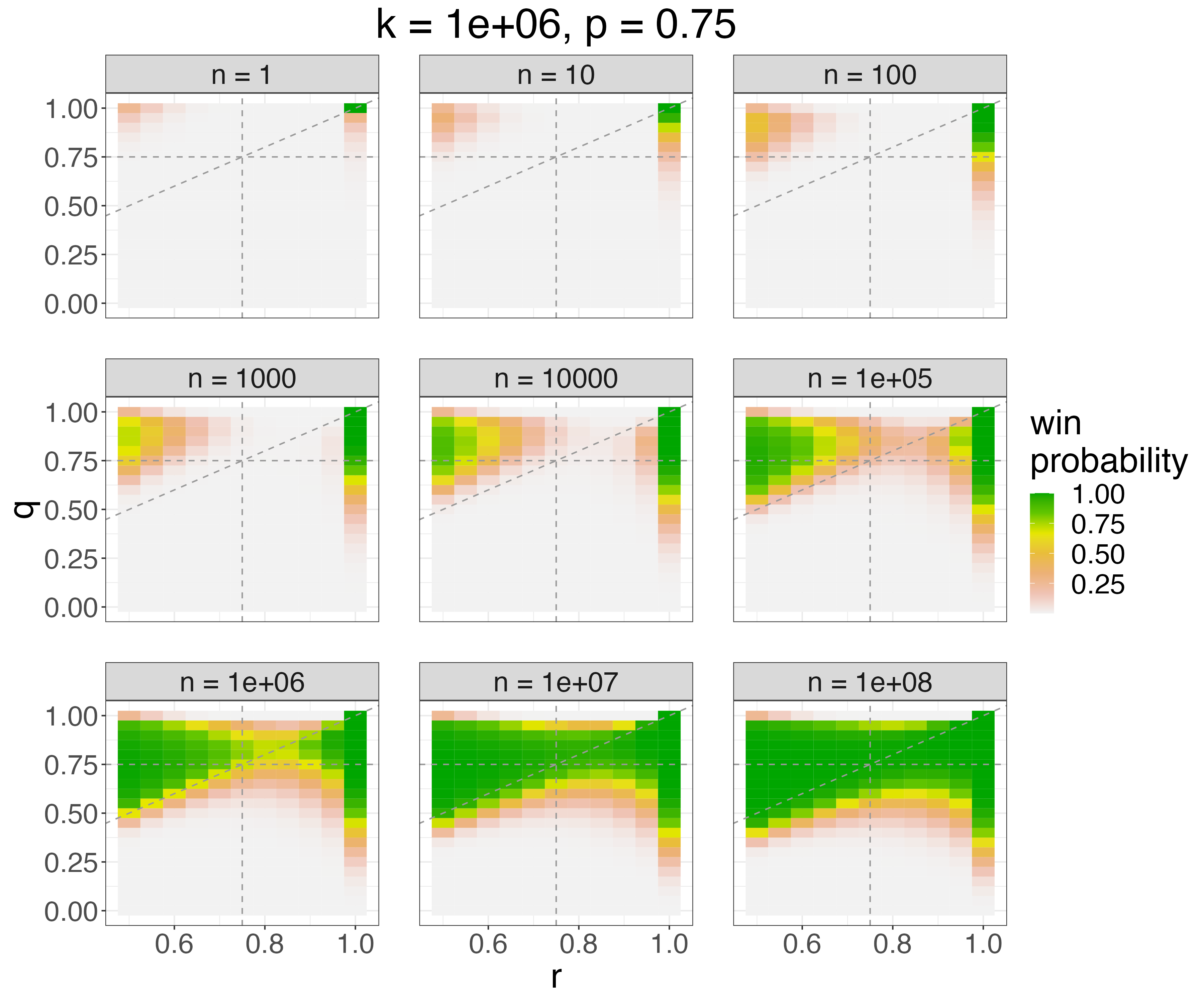}}\quad
  \subfloat[]{\includegraphics[width=.42\textwidth]{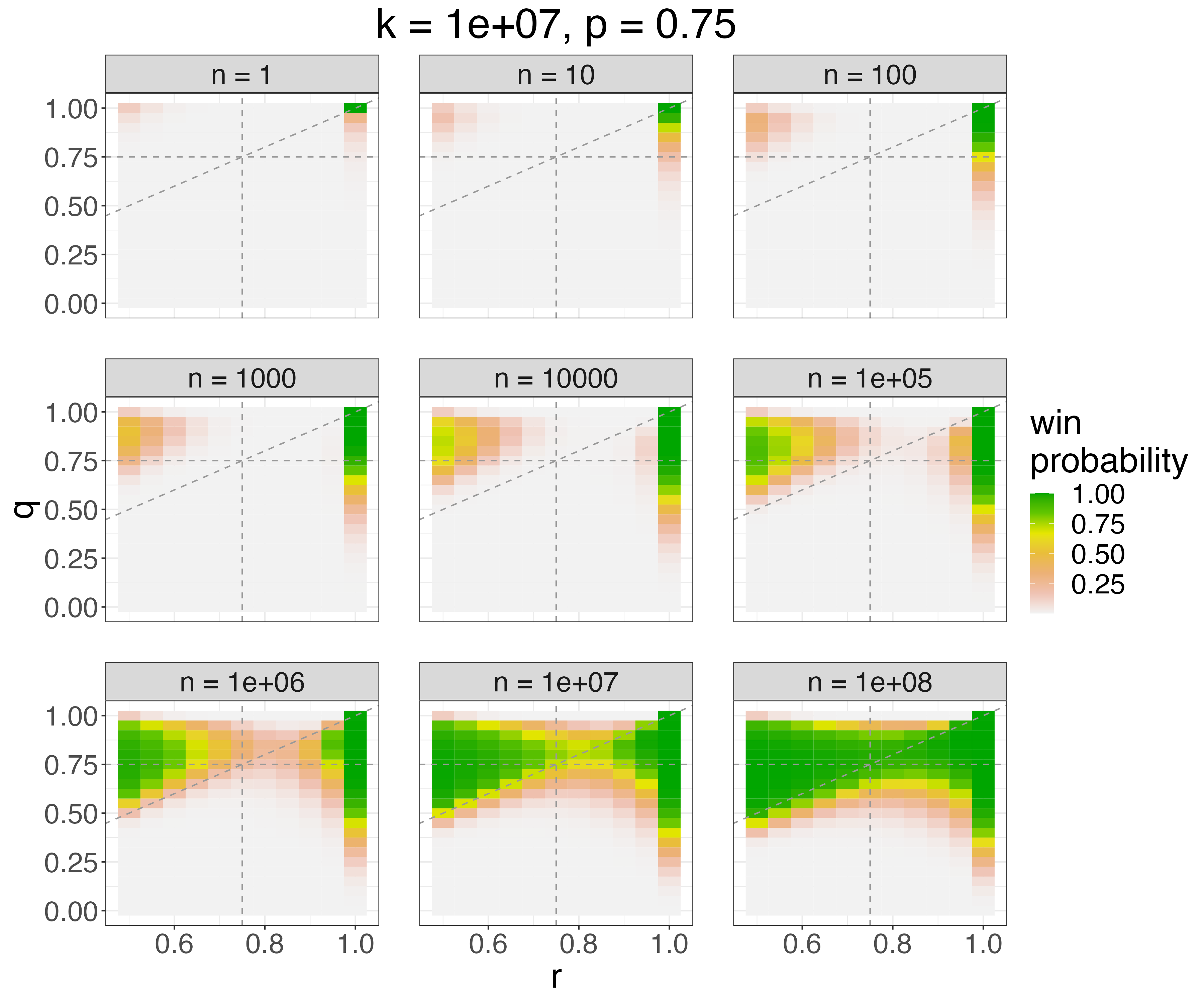}}\\
  \caption{
    The probability (color) that the maximum Hamming score of $n$ submitted Bernoulli($q$) brackets relative to a reference Bernoulli($p$) bracket exceeds that of $k$ opposing Bernoulli($r$) brackets as a function of $q$ ($y$-axis), $r$ ($x$-axis), $n$ (facet), and $k$ (letter) for $p=0.75$ in the ``guessing a randomly drawn bitstring'' contest with $p \equiv p_\rd$, $q \equiv q_\rd$, $r \equiv r_\rd$, and $R=6$ rounds.
  }
  \label{fig:wp_smm_varyK}
\end{figure}

\section{The Asymptotic Equipartition Property}\label{app:EP_thm}

Let $\X$ denote the set of all brackets of length $m$.
Each bracket $x \in \X$ consists of $m$ individual forecasts $x = (x_1,...,x_m)$.
Each forecast $x_i$ has $\o \geq 2$ possible outcomes.
Let $\P$ be a probability measure on $\X$.
The \textit{entropy} of $(\X, \P)$ is $H := \E[-\frac{1}{m}\log_2\P(X)]$, where $X$ is a bracket randomly drawn from $\X$ according to $\P$,
and the \textit{entropy} of a bracket $x \in \X$ is $H(x) := -\frac{1}{m}\log_2\P(x)$.
For instance, in our ``guessing a bitstring'' example from Section~\ref{sec:guessBitstring}, $\X$ is the set of all bitstrings of length $m$, each individual forecast is a bit, and supposing a bitstring is randomly drawn by $m$ independent Bernoulli($p$) coin flips, the entropy  is $H = -(p\log_2(p) + (1-p)\log_2(1-p))$.

Letting $\epsilon > 0$, we partition the set of all brackets $\X$ into three subsets,
\begin{equation}
    \begin{cases}
    \text{$\epsilon$-low entropy ``chalky'' brackets} \qquad \C_\epsilon := \{x \in \X : \P(x) \geq 2^{-m(H-\epsilon)} \}, \\
    \text{$\epsilon$-``typical'' brackets} \qquad \T_\epsilon := \{x \in \X : 2^{-m(H+\epsilon)} < \P(x) < 2^{-m(H-\epsilon)} \}, \\
    \text{$\epsilon$-high entropy ``rare'' brackets} \qquad \quad \R_\epsilon := \{x \in \X : \P(x) \leq 2^{-m(H+\epsilon)} \}.
    \end{cases}
\end{equation}
Under this definition, an individual chalky bracket is more probable than an individual typical bracket, which is more probable than an individual rare bracket.

The \textit{Asymptotic Equipartition Property} (A.E.P.) from Information Theory, Theorem~\ref{thm:equipartition_thm}, quantifies our intuition about chalky, typical, and rare brackets
from Section~\ref{sec:ep_sec}:
as $m$ tends to infinity, the probability mass of the set of brackets becomes increasingly more concentrated in an exponentially small set, the typical set.
The proof of Theorem \ref{thm:equipartition_thm} is adapted from \citet{cover_thomas_info_thy}.

The primary mathematical takeaway from Theorem~\ref{thm:equipartition_thm} is as follows.
The set $\X$ of all possible length $m$ brackets has exponential size $\o^m$, recalling that $\o$ is the number of possible outcomes of each individual forecast (e.g., $\o=2$ in ``guessing a bitstring'').
The $\epsilon$-typical set $\T_\epsilon$ comprises a tiny fraction of $\X$, having size $|\T_\epsilon| \approx 2^{mH}$ by part (b) of the theorem.
Therefore, $\T_\epsilon$ is exponentially smaller than $\X$,
$|\T_\epsilon|/|\X| \approx 2^{mH}/\o^m = 2^{-m(\log_2\o - H)}$.
In the ``guessing a bitstring'' example with $p = 0.75$ in which the reference bitstring consists of $m$ independent Bernoulli($p$) bits, $\o = 2$ and $H \approx 0.81$.
Thus, $|\T_\epsilon|/|\X| \approx 2^{-m(0.19)} \approx 0.88^m$.
When $m=63$ (as in March Madness), $|\T_\epsilon|/|\X| \approx 0.00026$, so the typical set of brackets is about $4,000$ times as small as the full set.
This factor increases exponentially as $m$ increases.

\begin{theorem}[Asymptotic Equipartition Property]
\label{thm:equipartition_thm}
\phantom\\
    Let $\epsilon > 0$.
\begin{enumerate}
    \item[(a)] The typical set asymptotically contains most of the probability mass:
    \begin{equation}
        \P(\T_\epsilon) \to 1 \text{ in probability as } m \to \infty.
    \end{equation}
    In other words, the reference bitstring is likely a typical bracket. 
    \item[(b)] For $m$ sufficiently large, we can bound the sizes of sets of chalky, typical, and rare brackets in terms of the entropy,
    \begin{equation}
    \begin{cases}
        |\C_\epsilon| < 2^{m(H-\epsilon)}, \\
        \big(1 - \epsilon) \cdot 2^{m(H-\epsilon)} < |\T_\epsilon| < 2^{m(H+\epsilon)}, \\
        |\R| > \o^m - 2^{m(H+\epsilon)} - 2^{m(H-\epsilon)}.
    \end{cases}
    \end{equation}
    In other words, most brackets are rare, exponentially fewer brackets are typical, and exponentially fewer of those are chalky.
    \item[(c)] For $m$ sufficiently large, the typical set is essentially the smallest high probability set:
    letting $\delta > 0$ and $B_\delta \subset \X$ be any high probability set with $\P(B_{\delta}) \geq 1 - \delta$, $B_\delta$ and $\T_\epsilon$ have similar sizes, $|B_{\delta}| \geq \big(1 - \epsilon - \delta\big)2^{m(H-\epsilon)}$.
    $B_\delta$ is a high probability set when $\delta$ is small, and in that case both $|B_\delta|$ and $|T|$ are essentially bounded below by $\big(1 - \epsilon)2^{m(H-\epsilon)}$.
    \end{enumerate}
\end{theorem}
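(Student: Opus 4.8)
The plan is to handle the three parts in sequence, with part (a) doing the real probabilistic work via the weak law of large numbers (WLLN) and parts (b) and (c) following from elementary counting once (a) is established. First I would make explicit the modeling assumption implicit in the ``guessing a bitstring'' example: that the $m$ forecasts comprising a randomly drawn bracket $X = (X_1,\dots,X_m)$ are i.i.d.\ under $\P$ with common marginal $p$ over the $\o$ outcomes. This factorization $\P(X) = \prod_{i=1}^m p(X_i)$ turns the per-bracket surprisal into a normalized sum of i.i.d.\ terms,
\[
H(X) = -\tfrac{1}{m}\log_2\P(X) = -\tfrac{1}{m}\sum_{i=1}^m \log_2 p(X_i),
\]
whose summands have common mean $\E[-\log_2 p(X_1)] = H$. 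Part (a) is then immediate: by the WLLN, $H(X) \to H$ in probability, and since membership $x \in \T_\epsilon$ is exactly the event $|H(x) - H| < \epsilon$ (apply $-\tfrac{1}{m}\log_2$ to the two inequalities defining $\T_\epsilon$), we obtain $\P(\T_\epsilon) = \P\bigl(|H(X) - H| < \epsilon\bigr) \to 1$.

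For part (b) I would run three short summation arguments, each comparing a sum of per-element probability bounds over a set to a bound on that set's total mass. Summing the per-element lower bound $\P(x) > 2^{-m(H+\epsilon)}$ over $x \in \T_\epsilon$ against $\P(\T_\epsilon) \leq 1$ gives $|\T_\epsilon| < 2^{m(H+\epsilon)}$. Summing the per-element upper bound $\P(x) < 2^{-m(H-\epsilon)}$ over $\T_\epsilon$ against the lower bound $\P(\T_\epsilon) > 1 - \epsilon$ from part (a) (valid for $m$ large) gives $|\T_\epsilon| > (1-\epsilon)2^{m(H-\epsilon)}$. Summing $\P(x) \geq 2^{-m(H-\epsilon)}$ over $x \in \C_\epsilon$ against $\P(\C_\epsilon) \leq 1 - \P(\T_\epsilon) < 1$ (strict, since $\P(\T_\epsilon) > 0$ for $m$ large) gives $|\C_\epsilon| < 2^{m(H-\epsilon)}$. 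The rare-set bound is then pure counting: since $\C_\epsilon, \T_\epsilon, \R_\epsilon$ partition $\X$ and $|\X| = \o^m$, we have $|\R_\epsilon| = \o^m - |\C_\epsilon| - |\T_\epsilon| > \o^m - 2^{m(H-\epsilon)} - 2^{m(H+\epsilon)}$.

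For part (c), given any $B_\delta$ with $\P(B_\delta) \geq 1 - \delta$, I would first lower-bound the overlap with the typical set by inclusion--exclusion, $\P(B_\delta \cap \T_\epsilon) \geq \P(B_\delta) + \P(\T_\epsilon) - 1 \geq 1 - \delta - \epsilon$, again using part (a) for $m$ large. Applying the typical-set per-element bound $\P(x) < 2^{-m(H-\epsilon)}$ to the elements of $B_\delta \cap \T_\epsilon \subseteq \T_\epsilon$ then yields
\[
1 - \delta - \epsilon \leq \P(B_\delta \cap \T_\epsilon) < |B_\delta \cap \T_\epsilon| \cdot 2^{-m(H-\epsilon)} \leq |B_\delta| \cdot 2^{-m(H-\epsilon)},
\]
which rearranges to the claimed $|B_\delta| \geq (1 - \epsilon - \delta)2^{m(H-\epsilon)}$.

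The only genuine obstacle is part (a), and even there the difficulty is conceptual rather than technical: one must pin down the independence structure of $\P$ so that the surprisal is a bona fide sum of i.i.d.\ random variables to which the WLLN applies, and observe that its mean equals $H$ by the definition of entropy (finiteness of $\E[-\log_2 p(X_1)]$ is automatic, as there are finitely many outcomes). Everything afterward is bookkeeping: each size estimate is a one-line comparison of summed per-element bounds to a mass bound, and part (a) re-enters only to supply $\P(\T_\epsilon) > 1 - \epsilon$ for the typical lower bound in (b) and for the overlap estimate in (c).
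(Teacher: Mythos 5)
Your proposal is correct and follows essentially the same route as the paper's own proof: WLLN applied to the normalized surprisal for part (a), per-element probability bounds summed against total-mass bounds for part (b), and inclusion--exclusion on $\T_\epsilon \cap B_\delta$ for part (c). If anything, you are slightly more careful than the paper in making the i.i.d.\ factorization of $\P$ explicit for the WLLN step and in deriving the strict inequality $|\C_\epsilon| < 2^{m(H-\epsilon)}$ from $\P(\C_\epsilon) < 1$ rather than from a per-element strict bound that the definition of $\C_\epsilon$ does not supply.
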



\begin{proof}[Proof (Theorem \ref{thm:equipartition_thm})]

\begin{align}
    \P(\T_\epsilon) = \P_{X \sim \P}(X \in \T_\epsilon) = \P\big( 2^{-m(H+\epsilon)} < \P(X) < 2^{-m(H-\epsilon)} \big) = \P\big( \big| -\frac{1}{m}\log_2\P(x) - H \big| \geq \epsilon\big),
\end{align}
which converges to 1 by the law of large numbers since $H = \E[-\frac{1}{m}\log_2\P(X)]$. 
This proves part (a).

Now,
\begin{align}
    1 = \sum_{x \in \X} \P(x) \geq \sum_{x \in \T_\epsilon} \P(x) > \sum_{x \in \T_\epsilon} 2^{-m(H+\epsilon)} = 2^{-m(H+\epsilon)} \cdot |\T_\epsilon|,
\end{align}
so $|\T_\epsilon| < 2^{m(H+\epsilon)}$.
By part (a), for $m$ sufficiently large,
\begin{align}
    1 - \epsilon \leq \P(\T_\epsilon) = \sum_{x \in \T_\epsilon} \P(x) < \sum_{x \in \T_\epsilon} 2^{-m(H-\epsilon)} = 2^{-m(H-\epsilon)} \cdot |\T_\epsilon|, 
\end{align}
so $|\T_\epsilon| > (1 - \epsilon)\cdot2^{m(H-\epsilon)}$. 
Similarly,
\begin{align}
    1 = \sum_{x \in \X} \P(x) \geq \sum_{x \in \C_\epsilon} \P(x) > \sum_{x \in \C_\epsilon} 2^{-m(H-\epsilon)} = 2^{-m(H-\epsilon)} \cdot |\C_\epsilon|,
\end{align} 
so $|\C_\epsilon| < 2^{m(H-\epsilon)}.$
Therefore,
\begin{align}
|\R_\epsilon| &= |\X \setminus (\T_\epsilon \cup \C_\epsilon) | = \o^m - |\T_\epsilon| - |\C_\epsilon| > \o^m - 2^{m(H+\epsilon)} - 2^{m(H-\epsilon)}.
\end{align}
This proves part (b).

Finally, by part (a), for $m$ sufficiently large,
\begin{align}
1 - \delta - \epsilon = (1 - \epsilon) + (1-\delta) - 1 \leq \P(\T_\epsilon) + \P(B_\delta) - \P(\T_\epsilon \cup B_\delta).
\end{align}
Thus,
\begin{align}
\begin{split}
1 - \delta - \epsilon 
&\leq \P(\T_\epsilon \cap B_\delta) 
= \sum_{x \in \T_\epsilon \cap B_\delta} \P(x) 
\leq \sum_{x \in \T_\epsilon \cap B_\delta}  2^{-m(H - \epsilon)} \\
&= |\T_\epsilon \cap B_\delta|  \cdot 2^{-m(H - \epsilon)} 
\leq |B_\delta| \cdot 2^{-m(H - \epsilon)},
\end{split}
\end{align}
so $|B_{\delta}| \geq \big(1 - \epsilon - \delta\big)2^{m(H-\epsilon)}$.
This proves part (c).
\end{proof}

\section{Pick six details}\label{app:pick_six_details}

We can explicitly and quickly compute a tractable lower bound for the expected profit (Formula~\eqref{eqn:pick_six_Eprofit}) under our pick six model from Section~\ref{sec:ex_pickSix}.
We begin with 
\begin{align}
   \E\bigg( \frac{W}{W + \Wopp} \bigg) 
   &= \E_{\tau\sim\bP, x\sim\bQ,y\sim\bR}\bigg( \frac{W}{W + \Wopp} \bigg) \\
   &= \sum_{\tau}\P(\tau) \E\bigg( \frac{W}{W + \Wopp} \bigg|\tau \bigg) \\
   &= \sum_{\tau}\P(\tau) \sum_{w,w'} \big( \frac{w}{w + w'} \big) \P(W=w,\Wopp=w'|\tau) \\
   &= \sum_{\tau}\P(\tau) \sum_{w,w'} \big( \frac{w}{w + w'} \big) \P(W=w|\tau)\P(\Wopp=w'|\tau) 
\end{align}
since $W$ is conditionally independent of $\Wopp$ given $\tau$,
\begin{align}
   &= \sum_{\tau}\P(\tau) \sum_{w'} \sum_{w\geq1} \big( \frac{w}{w + w'} \big) \P(W=w|\tau)\P(\Wopp=w'|\tau) 
\end{align}
since if $w=0$, $w/(w+w')=0$,
\begin{align}
   &\geq \sum_{\tau}\P(\tau) \sum_{w'} \sum_{w\geq1} \big( \frac{1}{1 + w'} \big) \P(W=w|\tau)\P(\Wopp=w'|\tau) 
\end{align}
since $w/(w+w')\geq 1/(1+w')$, which is essentially to say that we won't submit duplicate tickets,
\begin{align}
   &= \sum_{\tau}\P(\tau) \P(W\neq0|\tau) \sum_{w'} \big( \frac{1}{1 + w'} \big) \P(\Wopp=w'|\tau) \\
   &= \sum_{\tau}\P(\tau) \P(W\neq0|\tau) \E\bigg[\frac{1}{1+\Wopp} \bigg| \tau\bigg] \\
   &\geq \sum_{\tau}\P(\tau) \P(W\neq0|\tau) \frac{1}{1+\E[\Wopp|\tau]}  
\end{align}
by Jensen's inequality, since $x \mapsto 1/(1+x)$ is convex when $x>0$.

Now,
\begin{align}
    \P(\tau) = \P_{\tau\sim\bP}(\tau) = \P(\tau_1,...,\tau_s) = \prod_{j=1}^{s}\P(\tau_j) = \prod_{j=1}^{s}\bP_{\tau_j j}.
\end{align}
Also,
\begin{align}
    \P(W\neq0'|\tau) 
    &= \P_{\tau\sim\bP,x\sim\bQ}(W\neq0'|\tau) \\
    &= \P(\exists \ell\in\{1,...,n\} \text{ such that } x^{(\ell)}=\tau|\tau) \\
    &= 1 - \P(\forall \ell\in\{1,...,n\}, \ x^{(\ell)} \neq \tau|\tau) \\
    &= 1 - \P(x^{(1)} \neq \tau|\tau)^n 
\end{align}
since the $\{x^{(\ell)}\}$ are i.i.d.,
\begin{align}
    &= 1 - \P(\exists j\in\{1,...,s\} \text{ such that }  \ x^{(1)}_j \neq \tau_j|\tau)^n \\
    &= 1 - \big(1 - \P(\forall j\in\{1,...,s\}, \  \ x^{(1)}_j = \tau_j|\tau)\big)^n  \\
    &= 1 - \big(1 - \prod_{j=1}^{s} \P(x^{(1)}_j = \tau_j|\tau)\big)^n  
\end{align}
since each of the $s$ races are independent,
\begin{align}
&= 1 - \big(1 - \prod_{j=1}^{s} \bQ_{\tau_j j} \big)^n.
\end{align}
Then, by similar logic,
\begin{align}
    \E[\Wopp|\tau]
    &= \E_{\tau\sim\bP,y\sim\bR}[\Wopp|\tau] \\
    &= \E\bigg[ \sum_{\ell=1}^{k} \one\{y^{(\ell)} = \tau\} \bigg|\tau \bigg] \\
    &= \sum_{\ell=1}^{k} \P(y^{(\ell)} = \tau |\tau) \\
    &= k\cdot \P(y^{(1)} = \tau |\tau) \\
    &= k\cdot \prod_{j=1}^{s} \P(y^{(1)}_j = \tau_j |\tau) \\
    &= k\cdot \prod_{j=1}^{s} \bR_{\tau_j j}.
\end{align}
Combining all these formulas, we can explicitly and quickly evaluate a lower bound for the expected profit,
\begin{align}
    \E[\text{Profit}] &= -n + T\cdot\E\bigg( \frac{W}{W + \Wopp} \bigg) \\
    &\geq -n + T \cdot \sum_{\tau}\P(\tau) \P(W\neq0|\tau) \frac{1}{1+\E[\Wopp|\tau]} \\
    &= -n + T \cdot \sum_{\tau} 
    \bigg(\prod_{j=1}^{s}\bP_{\tau_j j}\bigg)
    \bigg( 1 - \big(1 - \prod_{j=1}^{s} \bQ_{\tau_j j} \big)^n \bigg)
    \bigg( \frac{1}{1+ k\cdot \prod_{j=1}^{s} \bR_{\tau_j j}} \bigg).
\label{eqn:pick_six_Eprofit_full}
\end{align}


\end{document}